\definecolor{darkred}{rgb}{0.6,0,0}
\definecolor{darkgreen}{rgb}{0,0.5,0}
\definecolor{darkblue}{rgb}{0,0,0.5}
\definecolor{SkyBlue}{rgb}{0.53, 0.81, 0.92}
\pgfplotsset{compat=1.5.1}
\newcommand\irregularcircle[2]{% radius, irregularity
  \pgfextra {\pgfmathsetmacro\len{(#1)+rand*(#2)}}
  +(0:\len pt)
  \foreach \a in {10,25,...,350}{
    \pgfextra {\pgfmathsetmacro\len{(#1)+rand*(#2)}}
    -- +(\a:\len pt)
  } -- cycle
}
\newcommand{\argmin}[2]{\mathrm{arg}\,\underset{#1}{  \mathrm{min}} \; #2}  % Argmin
\def\R{\mathbb{R}}          									 	          % Reals
\def\N{\mathbb{N}}          									    	      % Naturals
\def\C{\mathbb{C}}          									        	  % Complex
\def\nab{\mbox{{\boldmath{$\nabla$}}}}
\renewcommand\Re{\mathrm{Re}}          									 	          % Reals
\def\e{\mathrm{e}}          									        	  % exp
\def\ii{\mathrm{i}}          									        	  % imaginary number
\newcommand{\fd}{\mathrm{\mathbf{f}}}
\newcommand{\qd}{\mathrm{\mathbf{q}}}
\newcommand{\yd}{\mathrm{\mathbf{y}}}
\newcommand{\ud}{\mathrm{\mathbf{u}}}
\newcommand{\vd}{\mathrm{\mathbf{v}}}
\newcommand{\wdd}{\mathrm{\mathbf{w}}}
\newcommand{\Gd}{\mathrm{\mathbf{G}}}
\newcommand{\Jd}{\mathrm{\mathbf{J}}}
\newcommand{\Id}{\mathrm{\mathbf{I}}}
\newcommand{\kd}{\mathrm{\mathbf{k}}}
\newcommand{\xd}{\mathrm{\mathbf{x}}}
\newcommand{\dd}{\mathrm{d}}
\newcommand{\dr}{\partial}
\newcommand{\diag}[1]{\mathrm{\mathbf{diag}}(#1)}
\newcommand{\ie}{\textit{i.e., }}                                            % i.e.
\newcommand{\eg}{\textit{e.g., }}                                            % e.g.
\newtheorem{lemma}{Lemma}[section]
\newtheorem{proposition}[lemma]{Proposition}
\begin{document}
\title{Efficient Inversion of Multiple-Scattering Model for Optical Diffraction Tomography}

\author[1,$\dagger$]{Emmanuel Soubies}
\author[1,$\dagger$]{Thanh-An Pham}
\author[1,*]{Michael Unser}
\affil[1]{Biomedical Imaging Group, \'Ecole polytechnique f{\'e}d{\'e}rale de Lausanne
(EPFL), CH-1015 Lausanne, Switzerland.}
\date{}
 \maketitle
 
\renewcommand{\thefootnote}{}
\footnote{\noindent$\dagger$ These authors contributed equally to this paper. Emails: \{emmanuel.soubies,thanh-an.pham,michael.unser\}@epfl.ch}
\renewcommand{\thefootnote}{\arabic{footnote}}

%%%%%%%%%%%%%%%%%%% abstract and OCIS codes %%%%%%%%%%%%%%%%
%% [use \begin{abstract*}...\end{abstract*} if exempt from copyright]
\begin{abstract}
Optical diffraction tomography  relies on solving an inverse scattering problem governed by the wave equation. Classical reconstruction algorithms are based on linear approximations of the forward model (Born or Rytov), which limits their applicability to thin samples with low refractive-index contrasts. More recent works have shown the  benefit of adopting nonlinear models. They account for multiple scattering and reflections, improving the quality of reconstruction. To reduce the complexity and memory requirements of these methods, we derive an explicit formula for the Jacobian matrix of the nonlinear Lippmann-Schwinger model which lends itself to an efficient evaluation of the gradient of the data-fidelity term. This allows us to deploy efficient methods to solve the corresponding inverse problem subject to sparsity constraints.
\end{abstract}

%For a complete list of OCIS codes, visit: https://www.osapublishing.org/oe/submit/ocis/

%%%%%%%%%%%%%%%%%%%%%%%%%%  body  %%%%%%%%%%%%%%%%%%%%%%%%%%
%%%%%%%%%%%%%%%%%%%%%%%%%%  body  %%%%%%%%%%%%%%%%%%%%%%%%%%
\section{Introduction}  

	Optical diffraction tomography  (ODT) was introduced in~\cite{Wolf1969} by E. Wolf in the late '60s. It is a microscopic technique that retrieves the distribution of refractive indices in biological samples out of holographic measurements of the scattered complex field produced when the sample is illuminated by an incident wave.  This method is of particular interest in biology because, contrarily to fluorescence imaging, it does not require any staining of the sample~\cite{Jin2017}. It proceeds by solving an inverse scattering problem, where the scattering phenomenon is governed by the wave equation. There is a vast literature on inversion methods going from linearized models (Born, Rytov) \cite{Wolf1969,Devaney1981} to nonlinear ones~\cite{Mudry2012,Kamilov2016,Liu2016,Liu2017}. It is worth noting that the scattering model, along with its associated inverse problem, is generic and not limited to optical diffraction tomography. In particular, it is encountered in many other fields such as acoustics, microwave imaging, or radar applications~\cite{Colton2012}.

\subsection{From the wave equation to the Lippmann-Schwinger integral equation}

	Let us consider an unknown object of refractive index $n(\xd)$ lying in the region $\Omega \subseteq \R^D$ ($D \in \{2,3\}$) and being immersed in a medium of refractive index $n_{\mathrm{b}}$, as depicted in Fig.~\ref{fig:schema}. This sample is  illuminated by the incident plane wave
	\begin{equation}\label{eq:PlaneWave}
		u^{\mathrm{in}}(\xd,t) = \Re \left( u_0 \e^{\ii \kd \cdot  \xd - \ii \omega t }   \right) , 
	\end{equation}
where the wave vector  $\kd \in \R^D$  specifies the direction of the wave propagation, $\omega \in \R$ denotes its angular frequency, and  $u_0 \in \C$ defines its complex envelope (amplitude). The resulting total electric field $u(\xd,t)$ satisfies the wave equation
\begin{equation}\label{ea:waveEq}
	\nabla^2 u(\xd,t) - \frac{n^2(\xd)}{\mathrm{c}^2} \frac{\dr^2 u}{\dr t^2} (\xd,t) =0,
\end{equation}
where $\mathrm{c}\simeq 3\times 10^8$\meter\per\second~is the speed of light in free space. Denoting by $u(\xd)$ the complex amplitude of  $u(\xd,t)= \Re \left(  u(\xd)  \e^{ -\ii \omega t }   \right) $ and substituting it into~\eqref{ea:waveEq}, we obtain the  inhomogeneous Helmholtz equation
\begin{equation}\label{eq:Helmholtz1}
	\nabla^2 u(\xd) + k_0^2 n^2(\xd) u(\xd) =0, 
\end{equation}
with the propagating constant in free space $k_0 = \omega / \mathrm{c}$. The total field $u(\xd)$ is the sum of the scattered field $u^{\mathrm{sc}}(\xd)$ and of the incident field $u^{\mathrm{in}}(\xd)$, which  is itself a solution of the homogeneous Helmholtz equation $\nabla u^{\mathrm{in}}(\xd) + k_0^2 n_{\mathrm{b}}^2 u^{\mathrm{in}}(\xd) =0$. Accordingly, \eqref{eq:Helmholtz1} can be rewritten as (see \cite{Wolf1969})
\begin{equation}
	\nabla^2 u^{\mathrm{sc}}(\xd) + k_0^2 n_{\mathrm{b}}^2  u^{\mathrm{sc}}(\xd) =-f(\xd)u(\xd), 
\end{equation}
where $f(\xd) = k_0^2 (n^2(\xd)-n_{\mathrm{b}}^2)$ defines the   scattering potential function.  It follows that
\begin{equation}
	u^{\mathrm{sc}}(\xd) =   \int_{\Omega}  g(\xd-\xd') f(\xd')u(\xd')\,  \dd \xd',
\end{equation}
where $g(\xd)$ is the Green's function of the shift-invariant differential operator ($\nabla^2 + k_0^2 n_{\mathrm{b}}^2 \Id$).  Specifically, $g$  verifies  $\nabla^2 g(\xd)  + k_0^2 n_{\mathrm{b}}^2 g(\xd)  = -\delta(\xd) $, where $\delta$ is the Dirac distribution and the minus sign is a convention used in physics. Under Sommerfeld's radiation condition, $g(\xd)$ is given by \cite[and references therein]{Schmalz2010}
\begin{equation}
	g(\xd) = \left\lbrace 
	\begin{array}{ll}
		\frac{1}{4}H_0^{(1)} (k_0 n_{\mathrm{b}}\|\xd\|), & D=2, \\
		\frac{1}{4\pi} \frac{\e^{\ii k_0 n_{\mathrm{b}}\|\xd\|}}{\|\xd\|}, & D=3.
	\end{array}\right. 
\end{equation}
There, $H_0^{(1)}$ is the Hankel function of the first kind.
Finally, the total field $u(\xd)$ is governed by the Lippmann-Schwinger equation
\begin{equation}\label{eq:LippScwing}
	u(\xd) = u^{\mathrm{in}}(\xd) + \int_{\Omega}  g(\xd-\xd') f(\xd')u(\xd')\,  \dd \xd'.
\end{equation}

\begin{figure}
	\centering
	\resizebox {0.9\textwidth} {!} {
	\begin{tikzpicture}
		\draw[dotted,thick]  (-2.2,-2.2) rectangle (2.2,2.2);
		\draw[gray,fill=gray]  (5.5,-3) rectangle (5.7,3);
		\draw[gray,dashed,thin]  (-5.6,-3) -- (-5.6,3);
		 \draw[rounded corners=1mm,fill,inner color=SkyBlue!90,outer color=SkyBlue!20] (0,0) \irregularcircle{1.1cm}{0.2cm};
		  \draw[fill,inner color=brown!90,outer color=brown!60] (0.3,0.2)ellipse (0.2 and 0.3);
		  \node at (-0.2,-0.2) {$n(\xd)$};
		  \node at (-1.5,-1.5) {$n_{\mathrm{b}}$};
		%\draw[thick] plot[smooth] coordinates {(1.5,1.5) (1,0.9) (0.5,0.4) (0.7,0.3) (1,-1) (-1,-1) (-1,1)  (1,1.1) (1.5,1.5)};
		\draw[blue,thick] plot[smooth] coordinates {(3.3,2.5) (4.3,1.8) (4,1.3) (4.5,0.6)  (3.7,-0.1) (4,-0.7) (3.5,-1.5) (3.8,-2.1) (3.2,-2.5)};
		\draw[red,thick] plot[smooth] coordinates {(-3.2,2.5) (-3.8,2) (-3.8,1.5) (-4.2,1)  (-3.5,-0.1) (-4.2,-0.7) (-3.5,-1.5) (-3.6,-2.1) (-3.4,-2.5)};
		\node[align=center] at (3.5,3) {Forward scattered \\ wave $u^{\mathrm{sc}}$};
		\node[align=center] at (-3.5,3) {Backward scattered \\ wave $u^{\mathrm{sc}}$};
		\node[align=center] at (0,2.8) {Sample};
		\node[align=center] at (6,2.7) {$\Gamma$};
		\draw[-latex,thick,opacity=0.2] (-5.6,1.5)  -- ++ (-20:1cm);
		\draw[-latex,thick]  (-5.6,0) --  ++(0:1cm);
		\node at (-5,0.2) {$\kd$};
		\draw[-latex,thick,opacity=0.2]  (-5.6,-1.5) -- ++(20:1cm);
		\draw  (-5.2,0) --  ++(90:1.2cm);\draw  (-5.2,0) --  ++(-90:1.2cm);
		\draw  (-5.4,0) --  ++(90:1.2cm);\draw  (-5.4,0) --  ++(-90:1.2cm);
		\draw  (-5.6,0) --  ++(90:1.2cm);\draw  (-5.6,0) --  ++(-90:1.2cm);
		\draw  (-5.8,0) --  ++(90:1.2cm);\draw  (-5.8,0) --  ++(-90:1.2cm);
		\draw  (-6,0) --  ++(90:1.2cm);\draw  (-6,0) --  ++(-90:1.2cm);
		\draw[opacity=0.2]   (-5.6,1.5)   ++ (160:0.4cm) --  ++(70:1.2cm);\draw[opacity=0.2]   (-5.6,1.5)   ++ (160:0.4cm)--  ++(-110:1.2cm);
		\draw[opacity=0.2]  (-5.6,1.5)   ++ (160:0.2cm)  --  ++(70:1.2cm);\draw[opacity=0.2]  (-5.6,1.5)   ++ (160:0.2cm) --  ++(-110:1.2cm);
		\draw[opacity=0.2]  (-5.6,1.5)   ++ (-20:0cm)  --  ++(70:1.2cm);\draw[opacity=0.2]  (-5.6,1.5)   ++ (-20:0cm) --  ++(-110:1.2cm);
		\draw[opacity=0.2]  (-5.6,1.5)   ++ (-20:0.2cm)  --  ++(70:1.2cm);\draw[opacity=0.2]   (-5.6,1.5)   ++ (-20:0.2cm)  --  ++(-110:1.2cm);
		\draw[opacity=0.2]  (-5.6,1.5)   ++ (-20:0.4cm)  --  ++(70:1.2cm);\draw[opacity=0.2]  (-5.6,1.5)   ++ (-20:0.4cm) --  ++(-110:1.2cm);
		\draw[opacity=0.2]   (-5.6,-1.5)   ++ (200:0.4cm)--  ++(110:1.2cm);\draw[opacity=0.2]   (-5.6,-1.5)   ++ (200:0.4cm) --  ++(-70:1.2cm);
		\draw[opacity=0.2]  (-5.6,-1.5)   ++ (200:0.2cm)--  ++(110:1.2cm);\draw[opacity=0.2]  (-5.6,-1.5)   ++ (200:0.2cm) --  ++(-70:1.2cm);
		\draw[opacity=0.2]  (-5.6,-1.5)   ++ (20:0cm) --  ++(110:1.2cm);\draw[opacity=0.2]  (-5.6,-1.5)   ++ (20:0cm) --  ++(-70:1.2cm);
		\draw[opacity=0.2]  (-5.6,-1.5)   ++ (20:0.2cm) --  ++(110:1.2cm);\draw[opacity=0.2]  (-5.6,-1.5)   ++ (20:0.2cm) --  ++(-70:1.2cm);
		\draw[opacity=0.2]  (-5.6,-1.5)   ++ (20:0.4cm) --  ++(110:1.2cm);\draw[opacity=0.2]  (-5.6,-1.5)   ++ (20:0.4cm) --  ++(-70:1.2cm);
		\node[rotate=90] at (6,0) {Detector plane ($\yd_p$)};
		\node[rotate=90] at (-6.7,0) {Sources ($u_p^{\mathrm{in}}$)};
		\node at (1.9,1.9) {$\Omega$};
	\end{tikzpicture}
	}
	\caption{\label{fig:schema} Optical diffraction tomography. A sample of refractive index $n(\xd)$ is immersed in a medium of index $n_b$ and illuminated by an incident plane wave (wave vector $\kd$). The interaction of the wave with the object produces  forward and backward scattered waves. The forward scattered wave is recorded in the detector plane. Optionally,  a second detector plane may record the backward scattered wave (see Section~\ref{sec:ExpNum}). }
\end{figure}
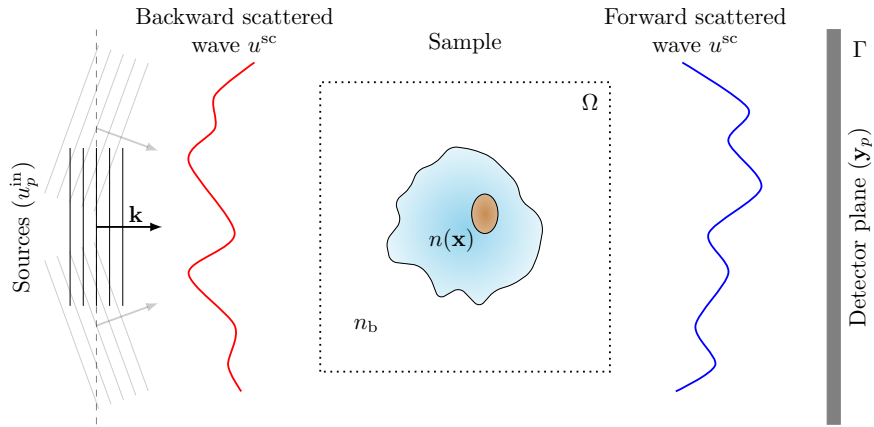

\subsection{Inverse ODT problem: prior work}   Let the object be illuminated by a series of incident fields $u_p^{\mathrm{in}}(\xd)$, $p \in [1\ldots P]$. Records of the resulting total fields $u_p(\xd)$ at  positions $\xd_m$ ($m \in [1\ldots M]$) in the detector plane $\Gamma$  are denoted $\yd_p \in \C^M$ (see Fig.~\ref{fig:schema}). The objective is then to retrieve the scattering potential function  $f(\xd)$ (\ie the refractive index $n(\xd)$) from the data $\yd_p$. Pioneering methods were using linear approximations of the model. For instance, assuming that the scattering field is weak compared to the incident one (\ie $u^{\mathrm{sc}} \ll u^{\mathrm{in}}$), one can interpret the phase of the transmitted wave as the Radon transform of the refractive index and then reconstruct $f$ using the filtered-back-projection algorithm~\cite{Kak2001,Choi2007}. This method ignores the effect of diffraction. The first Born approximation~\cite{Wolf1969} has then been proposed as a refined model. Its validity is however limited to thin samples with weak variations of their refractive index (RI)~\cite{Chen1998}. A more accurate linearization, less sensitive to the thickness of the sample but still limited to weak RI contrasts, is given by the Rytov approximation~\cite{Devaney1981,Sung2009}. It is derived by assuming that the total field has the form $u(\xd) = u^{\mathrm{in}}(\xd)\e^{\phi(\xd)}$, where $\phi(\xd)$ is a complex phase function.  Both Born and Rytov approximations have been originally used to derive direct inversion methods. They were later used within regularized variational approaches to improve the quality of reconstructed images~\cite{Sung2011,Lim2015}.

	Inversion methods that use a nonlinear model have been shown to significantly improve the accuracy of reconstruction. These include the conjugate-gradient method (CGM)~\cite{Chaumet2009,Belkebir2005}, the contrast source-inversion method (CSI)~\cite{Abubakar2002}, the beam-propagation method (BPM)~\cite{Kamilov2015}, the recursive Born approximation~\cite{Kamilov2016}, or the hybrid method proposed in~\cite{Mudry2012}.  Although still approximate (for instance, they do not properly take reflections into account), they more closely adhere to the model of the physical phenomenon than the linear models, at the price of a higher computational cost.  We refer the reader to  \cite{Jin2017} for additional details concerning existing approximations, regularizations, algorithms, and comparisons.

	To address applications with thick samples and large RI contrasts, a better solution is to rely on the exact Lippmann-Schwinger model which accounts for mutiple scattering and reflections. Such an approach has been recently proposed in~\cite{Liu2016,Liu2017} (SEAGLE algorithm). There, the authors tackle the problem from a variational perspective. They minimize a nonconvex objective using the well known fast iterative shrinkage-thresholding algorithm (FISTA)~\cite{Beck2009}. Their main contribution is to compute the forward model (which itself requires the inversion of an operator) using  Nesterov's accelerated gradient-descent (NAGD) method \cite{Nesterov1983} and, more interestingly, to explicitly compute the gradient of the quadratic data-fidelity term as an error-backpropagation of the forward algorithm.  However, the bottleneck of their method is its high memory consumption. Indeed, the error-backpropagation strategy requires one to store all the iterates produced during the computation of the iterative forward model. This can be limiting for large 3D volumes. 
	
%For instance, consider the case of a 3D volume of size $512\times 512 \times 128$, where each voxel is stored on 16 Bytes (complex number with double precision). Then, if $120$ iterations are required to compute the forward model (as used in the 2D experiments of \cite{Liu2016,Liu2017}, but more would be probably required in 3D), one has to store $120$ times the aforementioned volume leading to a memory consumption of $64$ Gigabytes. This value is the cost associated to one incident angle, and should be multiplied by the number of angles if one wants to perform parallel computation with respect to the different incident fields.

\subsection{Contributions} 

 To improve the computational efficiency of solvers such as SEAGLE, we provide an explicit expression for the Jacobian of the nonlinear Lippmann-Schwinger operator. This results in an efficient method to compute the gradient of the data-fidelity term and avoid recoursing to the memory-consuming error-backpropagation strategy. Another advantage is that the computation of the forward model and of the gradient are now decoupled. They can thus be solved using any numerical scheme. Then, considering simulated data, we show that the proposed method results in a significant reduction of both computational time and  memory requirements with respect to SEAGLE, at no loss in  quality. 
 
 	In Section~\ref{sec:Forward}, we formulate the discrete forward model proposed in~\cite{Liu2017}. Then, the common approach used to solve the inverse problem subject to sparsity constraints is presented in Section~\ref{sec:OptiPb}. There, we highlight our main innovation with respect to SEAGLE, which is a new computation of the gradient of the data-fidelity term. It relies on the derivation of the Jacobian of the forward model, which is given by Proposition~\ref{propo}. Finally, Sections~\ref{sec:AlgoAnalysis} and~\ref{sec:ExpNum} are dedicated to numerical comparisons.

\section{Solving the inverse problem}

\subsection{Formulation of the forward model} \label{sec:Forward}
	
	In this section, we review the formulation of the forward model that was  proposed by Liu \textit{et al.} in \cite{Liu2017}. Let the region of interest $\Omega$ be divided into $N\in \N$ ``pixels''. Then, over $\Omega$, we define the discrete version of~\eqref{eq:LippScwing} as
\begin{equation} \label{eq:LippScwingDiscret}
	\ud_p = \ud^{\mathrm{in}}_p + \Gd \, \diag{\fd} \ud_p,
\end{equation} 
where $\ud_p \in \C^N$, $\ud^{\mathrm{in}}_p \in \C^N$, $\fd \in \R^N$ are the discrete representations of  $u_p$, $u^{\mathrm{in}}_p$, and $f$, respectively. The diagonal matrix $ \diag{\fd}\in  \R^{N \times N}$ is formed out of  the entries of $\fd$, while $\Gd \in \C^{N \times N}$ stands for the matrix of the convolution operator on $\Omega$ (convolution with $g$). One can notice that \eqref{eq:LippScwingDiscret} is nonlinear with respect to $\fd$. On the other hand, given $\ud^{\mathrm{in}}_p$ and $\fd$, the computation of $\ud_p$ amounts to inverting the operator ($\Id - \Gd \, \diag{\fd}$). Instead of attempting to compute this inverse directly, the ODT forward model on $\Omega$, for a given $\fd$,  is defined as
\begin{equation}\label{eq:ForwardOptPb}
	\ud_p(\fd) = \argmin{\ud \in \C^N}{\frac12 \| (\Id - \Gd \,\diag{\fd})  \ud -  \ud^{\mathrm{in}}_p  \|_2^2} .
\end{equation}
This classical quadratic-minimization problem can be solved iteratively using numerous state-of-the-art algorithms (see Section~\ref{sec:ForwardComputation}).
Then, from the total field $\ud_p(\fd)$ (inside $\Omega$), we get measurements $\yd_p$ on $\Gamma$ using a different discretization $\tilde{\Gd} \in \C^{M \times N}$  of the Green's function (see \cite{Liu2017})
\begin{equation}
	\yd_p = \tilde{\Gd} \, \diag{\fd} \ud_p(\fd) +\ud_p^{\mathrm{in}}\vert_{\Gamma},
\end{equation}
where $\ud_p^{\mathrm{in}}\vert_{\Gamma}$ denotes the restriction of the field $\ud_p^{\mathrm{in}}$ to the area~$\Gamma$.

\subsection{Common optimization strategy}\label{sec:OptiPb}
Following the classical variational approach, the estimation of $\fd \in \R^N$  from the measurements $\{\yd_p \in \C^{M}\}_{p\in [1\ldots P]}$ is formulated as the  optimization problem
\begin{equation}\label{eq:OptiPb}
	\widehat{\fd} \in \left\lbrace \argmin{\fd \in \R^N}{ \left( \mathcal{D}(\fd) + \mu \mathcal{R}(\fd)}\right)  \right\rbrace, 
\end{equation}
where $\mathcal{D} : \R^N  \rightarrow \R$ measures the fidelity to data, $\mathcal{R}   : \R^N \rightarrow \R$ imposes some prior to the solution (regularization), and $\mu >0$ balances between these two terms. It is customary to consider the  data term
\begin{equation}\label{eq:DataTerm}
	\mathcal{D}(\fd) = \sum_{p=1}^P \mathcal{D}_p(\fd),
\end{equation}
where $\forall p \in [1\ldots P]$
\begin{equation}
	\mathcal{D}_p(\fd) = \frac12 \|\tilde{\Gd} \, \diag{\fd}\ud_p(\fd) - \yd_p^{\mathrm{sc}}\|^2_2,
\end{equation}
which is well suited for Gaussian noise.
Here, $\yd_p^{\mathrm{sc}} = (\yd_p - \ud_p^{\mathrm{in}}\vert_{\Gamma})$ is the scattered measured field at the detector plane $\Gamma$ and  $\ud_p(\fd)$ is given by~\eqref{eq:ForwardOptPb}. As regularizer $\mathcal{R}$, the combination 
\begin{equation}\label{eq:Regul}
	\mathcal{R}(\fd) = i_{\geqslant 0}(\fd) + \| \nab \fd \|_{2,1}  = i_{\geqslant 0}(\fd) + \sum_{n=1}^N \sqrt{\sum_{d=1}^D (\dr_d \fd)_n^2}
\end{equation}
of total variation (TV) penalty and nonnegativity constraint is used, where $i_{\geqslant 0}(\fd) = \{ 0, \text{ if } \fd_n \geq 0 \, \forall n ; +\infty, \text{ otherwise} \}$ and $\dr_d$ denotes the gradient operator along the $d$th direction. This choice is supported by the facts that we consider situations where $n_{\mathrm{b}} \leq n(\xd) \, \Rightarrow f(\xd) \geq 0$ and that $n$ and, thus, $f$ can be assumed to be piecewise-constant. It is worth noting that \eqref{eq:OptiPb} is nonconvex due to the nonlinearity of the forward operator in~\eqref{eq:LippScwingDiscret}. However, since $\mathcal{D}$ is smooth with respect to $\fd$,~\eqref{eq:OptiPb} can be solved by deploying a forward-backward splitting (FBS) method~\cite{Combettes2005} or some accelerated variants \cite{Beck2009,Nesterov2007}, as presented in Algorithm~\ref{Algo:FBSAcc}. The gradient of the data-fidelity term $\mathcal{D}$ is given by
	\begin{equation}\label{eq:propo}
		\nab \mathcal{D}(\fd) = \sum_{p=1}^P \nab \mathcal{D}_p(\fd),
	\end{equation}
	with
	\begin{equation}
		\nab \mathcal{D}_p(\fd) = \Re\left( \Jd_{h_p}^H(\fd)  \tilde{\Gd}^H(\tilde{\Gd} \, \diag{\fd}\ud_p(\fd) - \yd_p^{\mathrm{sc}}) \right),
	\end{equation}
	where $\Jd_{h_p}(\fd)$ denotes the Jacobian matrix of  
	\begin{equation}\label{eq:funcH}
		h_p : \fd \mapsto \diag{\fd}\ud_p(\fd).
	\end{equation}	
Algorithm~\ref{Algo:FBSAcc} encompasses FISTA~\cite{Beck2009}  for a specific choice of the sequence $(\alpha^k)_{k\in \N}$. Its convergence is guaranteed in the convex case when $\gamma < 1/\mathrm{Lip}(\nab \mathcal{D})$, where $\mathrm{Lip}(\mathcal{\nab \mathcal{D}})$ is the Lipschitz constant of  $\nab \mathcal{D}$. In the nonconvex case, a local convergence of the classical FBS algorithm can be shown~\cite{Attouch2013}. Although, to the best of our knowledge, there exists no theoretical proof of convergence of accelerated versions for nonconvex function, Algorithm~\ref{Algo:FBSAcc} always converged in our experiments.

	\begin{algorithm}[t]
		\caption{Accelerated forward-backward splitting.}\label{Algo:FBSAcc}
	\begin{algorithmic}[1]
		\REQUIRE  $\fd^0 \in \R^N$, $(\alpha^k)_{k\in \N}$, $\gamma \in \left( 0, 1/\mathrm{Lip}(\nab \mathcal{D}) \right) $
		\STATE $\vd^1 = \fd^0$
		\STATE $k=1$
		\WHILE{(not converged)}
			\STATE $\ud_p^k \leftarrow \ud_p(\fd^k), \forall p \in [1\ldots P]$ (forward model \eqref{eq:ForwardOptPb})
			\STATE $\mathrm{\mathbf{d}}^k = \sum_{p=1}^P \Re\left( \Jd_{h_p}^H(\fd^k) \tilde{\Gd}^H(\tilde{\Gd} \, \diag{\fd^k}\ud_p^k - \yd_p^{\mathrm{sc}}) \right)  $
			\STATE $\fd^{k}= \mathrm{prox}_{\gamma \lambda \mathcal{R}}\left( \vd^k  - \gamma \mathrm{\mathbf{d}}^k\right) $
			\STATE $\vd^{k+1}=  \fd^k + \alpha^k (\fd^k - \fd^{k-1})$
			\STATE $k=k+1$
		\ENDWHILE
 	\end{algorithmic}
	\end{algorithm}

\subsubsection{Computation of $\Jd^H_{h_p}(\fd)$} 

The computation of $\Jd^H_{h_p}(\fd)$, required at line 5 of Algorithm~\ref{Algo:FBSAcc}, is challenging. The existence of a closed-form solution is made unlikely by the fact that the forward model in~\eqref{eq:LippScwingDiscret} itself requires one to invert an operator. We distinguish two distinct strategies.
	\begin{enumerate}
		\item SEAGLE: Build an error-backpropagation rule from the NAGD algorithm used to compute the forward model~\eqref{eq:ForwardOptPb}.
		\item Ours: Derive an explicit expression of $\Jd_{h_p}(\fd)$, as given in Section~\ref{sec:grad} (Proposition~\ref{propo}).
	\end{enumerate}
	
\subsubsection{Computation of $\mathrm{prox}_{\gamma \lambda \mathcal{R}}$} 
Numerous methods have been proposed to compute the proximity operator of $\mathcal{R}$, \cite{Beck2009b,Chambolle2011,Kamilov2017}. In SEAGLE, Liu \textit{et al.} use the algorithm proposed by Beck and Teboulle \cite{Beck2009b}.  Here, we compute it using the popular alternating-direction method of multipliers (ADMM)~\cite{Boyd2011}, which is well suited to the minimization of the sum of three convex functions. Moreover, it provides a high modularity for spatial regularization since one can easily change from one regularizer  (\eg TV) to another  (\eg Hessian Shatten-norm~\cite{Lefkimmiatis2013}). Details about the computation of  $\mathcal{R}$ are provided in Appendix~\ref{apndx:Prox}. 

\subsubsection{Speedup strategies} \label{sec:speedup}
 The cost of evaluating the forward model with \eqref{eq:ForwardOptPb} and the gradient $\nab \mathcal{D}$ is proportional to the number $P$ of illuminations $\ud_p^{in}$. However, these computations can easily be parallelized by performing the computation for each illumination (or each element of the sum in \eqref{eq:propo}) on a separate thread.  Moreover, in the spirit of the stochastic gradient-descent algorithm~\cite{Bottou2010}, we approximate $\nab \mathcal{D}$ as
	\begin{equation}
		\nab \mathcal{D}(\fd) \simeq \sum_{p \in \omega} \nab \mathcal{D}_p(\fd),
	\end{equation}
where $\omega$ is a subset of $[1\ldots P]$. We change $\omega$ at each iteration. Such a method is known to spare many computations when  $\nab \mathcal{D}_p$ does not admit a simple-form expression.  
	
\section{Efficient computation of the gradient $\nab \mathcal{D}$} \label{sec:grad}  
	 
	The error-backpropagation strategy used in SEAGLE to compute $\Jd^H_{h_p}(\fd)$ implies that one must store all the forward iterates. This consumes memory resources and compromises the deployment of the method for large 3D data. Instead, Proposition~\ref{propo} reveals that its computation requires one to invert the operator $(\Id -  \diag{\fd} \Gd^H)$. This operator has the same form (and size) that the operator we invert within the forward computation in \eqref{eq:ForwardOptPb}  and both can be computed in a similar way, using an iterative algorithm.  Moreover, it allows us to decouple the forward and gradient computation in Algorithm~\ref{Algo:FBSAcc}, which has the two following advantages:
\begin{itemize}
	\item choice of any iterative algorithm for computing \eqref{eq:ForwardOptPb} at line 4 of Algorithm~\ref{Algo:FBSAcc}, and computing $\Jd^H_{h_p}(\fd)$ at line 5 of Algorithm~\ref{Algo:FBSAcc} (see Section~\ref{sec:ForwardComputation});
	\item reduction of the memory consumption (no needs for storing forward iterates).
\end{itemize}

\begin{proposition} \label{propo} The Jacobian matrix of the function $h_p$ in \eqref{eq:funcH}  is given by 
\begin{equation}
	\Jd_{h_p}(\fd) = \left(  \Id +\diag{\fd} (\Id - \Gd \, \diag{\fd})^{-1}\Gd \right)  \diag{\ud_p(\fd)} .
\end{equation}
\end{proposition}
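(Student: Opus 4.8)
The plan is to differentiate the implicit relation defining $\ud_p(\fd)$ and then assemble the Jacobian of $h_p$ by the product rule. First I would recall that, at least when $(\Id - \Gd\,\diag{\fd})$ is invertible, the minimizer in \eqref{eq:ForwardOptPb} coincides with the solution of the linear system \eqref{eq:LippScwingDiscret}, so that $\ud_p(\fd)$ satisfies $(\Id - \Gd\,\diag{\fd})\,\ud_p(\fd) = \ud_p^{\mathrm{in}}$, with $\ud_p^{\mathrm{in}}$ independent of $\fd$. Treating $\ud_p$ as an implicit function of $\fd$ and differentiating this identity with respect to the $n$th component $\fd_n$ gives $-\Gd\,\diag{\mathbf e_n}\,\ud_p + (\Id - \Gd\,\diag{\fd})\,\dr_{\fd_n}\ud_p = \mathbf 0$, where $\mathbf e_n$ is the $n$th canonical basis vector; hence $\dr_{\fd_n}\ud_p = (\Id - \Gd\,\diag{\fd})^{-1}\Gd\,\diag{\mathbf e_n}\,\ud_p$. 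Observing that $\diag{\mathbf e_n}\,\ud_p = (\ud_p)_n\,\mathbf e_n$ and stacking these column derivatives, one obtains the compact form $\Jd_{\ud_p}(\fd) = (\Id - \Gd\,\diag{\fd})^{-1}\Gd\,\diag{\ud_p(\fd)}$.

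Next I would handle the outer diagonal factor. Writing $h_p(\fd) = \diag{\fd}\,\ud_p(\fd)$ componentwise as $(h_p)_i = \fd_i\,(\ud_p)_i$, the product rule gives $\dr_{\fd_n}(h_p)_i = \delta_{in}(\ud_p)_i + \fd_i\,\dr_{\fd_n}(\ud_p)_i$, which in matrix form reads $\Jd_{h_p}(\fd) = \diag{\ud_p(\fd)} + \diag{\fd}\,\Jd_{\ud_p}(\fd)$. Substituting the expression for $\Jd_{\ud_p}(\fd)$ found above and factoring $\diag{\ud_p(\fd)}$ on the right — using that $\diag{\ud_p(\fd)}$ commutes with nothing in particular but simply appears as a common right factor — yields
\begin{equation}
	\Jd_{h_p}(\fd) = \Bigl(\Id + \diag{\fd}\,(\Id - \Gd\,\diag{\fd})^{-1}\Gd\Bigr)\diag{\ud_p(\fd)},
\end{equation}
which is the claimed formula. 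The Hermitian transpose needed in Algorithm~\ref{Algo:FBSAcc} then follows formally, and one sees the operator $(\Id - \diag{\fd}\,\Gd^H)$ appear upon taking adjoints, since $\bigl((\Id - \Gd\,\diag{\fd})^{-1}\bigr)^H = (\Id - \diag{\fd}\,\Gd^H)^{-1}$ (here $\diag{\fd}$ is real, so $\diag{\fd}^H = \diag{\fd}$).

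The main obstacle, and the point that deserves care, is justifying that $\ud_p(\fd)$ is genuinely differentiable as a function of $\fd$ and that the argmin in \eqref{eq:ForwardOptPb} reduces to the linear solve. One must argue that $(\Id - \Gd\,\diag{\fd})$ is nonsingular — generically true, and implicitly assumed throughout since otherwise the forward model itself is ill-defined — so that the quadratic in \eqref{eq:ForwardOptPb} has the unique minimizer $\ud_p(\fd) = (\Id - \Gd\,\diag{\fd})^{-1}\ud_p^{\mathrm{in}}$. Given invertibility, the implicit function theorem (or simply the smoothness of matrix inversion away from singular matrices, together with the fact that $\fd \mapsto \Id - \Gd\,\diag{\fd}$ is affine) guarantees that $\fd \mapsto \ud_p(\fd)$ is smooth, legitimizing the termwise differentiation above. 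Everything else is bookkeeping: expanding $\diag{\mathbf e_n}\ud_p$, recognizing the column-stacking as right-multiplication by $\diag{\ud_p}$, and applying the product rule for the outer $\diag{\fd}$ factor. I would state the invertibility hypothesis explicitly at the start of the proof and then carry out the two differentiation steps.
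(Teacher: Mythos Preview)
Your proposal is correct and follows essentially the same approach as the paper: both differentiate the implicit Lippmann--Schwinger relation $(\Id - \Gd\,\diag{\fd})\ud_p(\fd) = \ud_p^{\mathrm{in}}$ and then apply the product rule to $h_p(\fd)=\diag{\fd}\ud_p(\fd)$. The only cosmetic difference is that the paper packages the computation as a G\^ateaux derivative in an arbitrary direction $\vd$, whereas you work componentwise with $\partial_{\fd_n}$ and then stack columns; your added remarks on invertibility and smoothness of $\fd\mapsto\ud_p(\fd)$ are a welcome clarification that the paper leaves implicit.
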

\begin{proof}
%	By definition, we have
%	\begin{equation}
%		\left[ \Jd_{h_p}(\fd)\right]_{ij} = \frac{\dr (h_p(\fd))_i}{\dr \fd_j}.
%	\end{equation}
	
	We use the G\^ateaux derivative in the direction $\vd \in \R^N$ given by
	\begin{align}
		\dd h_p(\fd;\vd) & = \lim_{\varepsilon \rightarrow 0 } \frac{\diag{\fd + \varepsilon \vd}\ud_p(\fd + \varepsilon \vd) - \diag{\fd} \ud_p(\fd)}{\varepsilon} \notag\\
		&  = \diag{\ud_p(\fd)} \vd +  \lim_{\varepsilon \rightarrow 0 } \diag{\fd}\frac{\ud_p(\fd + \varepsilon \vd) - \ud_p(\fd)}{\varepsilon}  .\label{eq:proofPropo2}
	\end{align}
	
	Then, from \eqref{eq:LippScwingDiscret}, we get that
	\begin{align}\label{eq:proofPropo11}
		\ud_p^{\mathrm{in}} = & (\Id - \Gd \, \diag{\fd + \varepsilon \vd})\ud_p(\fd + \varepsilon \vd) \notag  \\
		  =  & (\Id - \Gd \, \diag{\fd})\ud_p(\fd + \varepsilon \vd) - \varepsilon\Gd \, \diag{ \vd} \ud_p(\fd + \varepsilon \vd) 
	\end{align}
	and
		\begin{equation}	\label{eq:proofPropo12}
		 (\Id - \Gd \,\diag{\fd})\ud_p(\fd) =  \ud_p^{\mathrm{in}}.
\end{equation}
	Combining \eqref{eq:proofPropo11} and \eqref{eq:proofPropo12}, we obtain that
	\begin{equation}\label{eq:proofPropo}
		 (\Id - \Gd \, \diag{\fd})(\ud_p(\fd + \varepsilon \vd)   - \ud_p(\fd) ) = \varepsilon\Gd \, \diag{ \vd} \ud_p(\fd + \varepsilon \vd) .
	\end{equation}
	Finally, we get that
	\begin{equation}
		\dd h_p(\fd;\vd)  =  \left(  \Id +\diag{\fd} (\Id - \Gd \, \diag{\fd})^{-1}\Gd \right)  \diag{\ud_p(\fd)} \vd
	\end{equation}
	and, thus, that
	\begin{equation}
	  \Jd_{h_p}(\fd)  = \left(  \Id +\diag{\fd} (\Id - \Gd \, \diag{\fd})^{-1}\Gd \right)  \diag{\ud_p(\fd)},
\end{equation}
	which completes the proof.
\end{proof}

\section{Algorithm analysis}\label{sec:AlgoAnalysis}

\subsection{Memory requirement} 

	In this section, we elaborate on the memory consumption of the proposed method in comparison with SEAGLE.  First, let us state that gradient based methods, such as NAGD or CG, have similar memory requirements. It corresponds roughly to three times the size of the optimization variable which is the part that is common to both algorithms. The additional memory requirement that is specific to SEAGLE relies only on the storage of the NAGD iterates during the forward computation. Suppose that $K_{\mathrm{NAGD}} \in \N$ iterations are necessary to compute the forward model with \eqref{eq:ForwardOptPb} and that the region $\Omega$ is sampled over $N \in \N$ pixels (voxels, in 3D). Since the total field $\ud_p(\fd)$ computed by NAGD is complex-valued, each pixel is represented with 16 bytes (double precision for accurate computations). Hence, the difference of memory consumption between SEAGLE and our method is	
	\begin{equation}
		\Delta_{\mathrm{Mem}}= N\times K_{\mathrm{NAGD}}  \times 16 \;[\mathrm{bytes}],
	\end{equation}
	which corresponds to the storage of the $K_{\mathrm{NAGD}}$ intermediate iterates of NAGD.
	Here, we assumed that $\nab \mathcal{D}$ was computed by sequentially adding the partial gradients $\nab \mathcal{D}_p$ associated to the $P$ incident fields. Hence, once the partial gradient associated to one incident angle is computed by successively applying  the forward model (NAGD) and  the error-backpropagation procedure, the memory used to store the intermediate iterates can be recycled to compute the partial gradient associated to the next incident angle. However, when the parallelization strategy detailled in Section \ref{sec:speedup} is used, the memory requirement is mutiplied by the number $N_{\mathrm{Threads}}\in \N$ of threads, so that
	\begin{equation}\label{eq:memory}
		\Delta_{\mathrm{Mem}}= N\times K_{\mathrm{NAGD}} \times N_{\mathrm{Threads}} \times 16 \; [\mathrm{bytes}].
	\end{equation}
	Indeed, since the threads of a single computer share  memory, computing $N_{\mathrm{Threads}}$ partial gradients in parallel requires $N_{\mathrm{Threads}}$ times more memory.
	
	For illustration, we give in Fig.~\ref{fig:Memory} the evolution of $\Delta_{\mathrm{Mem}}$ as a function of $N$ for different values of $K_{\mathrm{NAGD}}$ and $N_{\mathrm{Threads}}$. One can see with the vertical dashed lines that, for 3D volumes,  the memory used by SEAGLE quickly reaches several tens of Megabytes, even for small volumes (\eg $128\times 128 \times 128$), to hundreds of Gigabytes for the larger volumes that are typical of microscopy (\eg $512\times 512 \times 256$). This shows the limitation of SEAGLE for 3D reconstruction in the presence of a shortage of memory resources and reinforces the interest of the proposed alternative.
	
	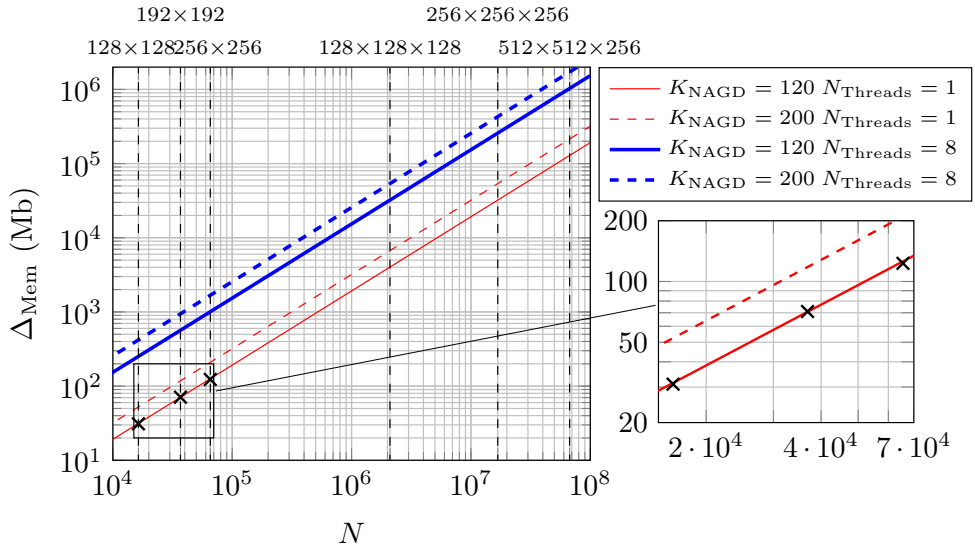
\begin{figure}[t]
		\centering 
		\resizebox {\textwidth} {!} {
		\begin{tikzpicture}[every pin/.style={rectangle,rounded corners=3pt,font=\tiny},]%[spy using outlines={black,magnification=2.5,size=3.5cm, connect spies,every spy on node/.append style={thick}}]
		\begin{axis}[
    					  %legend pos= south east,           % Position de la legende dans les figures
    					  legend style={at={(1.8,1)},legend cell align=left},
						  grid=both,                        % grid
						  %ylabel absolute, 
						  ymode=log,                        % y-log
						  xmode=log,                        % x-log
						  xlabel={$N$},
						  ylabel={$\Delta_{\mathrm{Mem}}$ (Mb)},
						  xlabel style={yshift=0.2cm},
						  xmax=100000000,ymax=2000000,xmin=10000,ymin=10,
						  extra x ticks={16384,36864,65536,2097152,16777216,67108864},
        			     extra x tick labels={{\scriptsize 128$\times$128 $\;$},{\scriptsize 192$\times$192 \\ $ $},{\scriptsize $\;$ 256$\times$256},{\scriptsize 128$\times$128$\times$128},{\scriptsize 256$\times$256$\times$256 \\ $ $},{\scriptsize 512$\times$512$\times$256}},
        			     extra x tick style={%      				 
            				 ,ticklabel pos=upper
            				 ,align=center
          				 },
    					  width=0.55\textwidth]		
    		\addplot[red,thin,domain=1:100000000,samples=400] {x*120*1*16/10^6};		
    		\addplot[red,dashed, thin,domain=1:100000000,samples=400] {x*200*1*16/10^6};
    		\addplot[blue,very thick,domain=1:100000000,samples=400] {x*120*8*16/10^6};
    		\addplot[blue,dashed, very thick,domain=1:100000000,samples=400] {x*200*8*16/10^6};	
    		\addplot[black,mark=x,mark size=3,thick] coordinates {(16384,31)};               
    		\addplot[black,mark=x,mark size=3,thick] coordinates {(36864,71)};               
    		\addplot[black,mark=x,mark size=3,thick] coordinates {(65536,123)};           
    		\addplot[black,dashed] coordinates {(16384,1) (16384,2000000)};               % 128 x 128
    		\addplot[black,dashed] coordinates {(36864,1) (36864,2000000)};               % 192 x 192
    		\addplot[black,dashed] coordinates {(65536,1) (65536,2000000)};               % 256 x 256
    		\addplot[black,dashed] coordinates {(2097152,1) (2097152,2000000)};       % 128 x 128 x 128
    		\addplot[black,dashed] coordinates {(16777216,1) (16777216,2000000)};   % 256 x 256 x 256   
    		\addplot[black,dashed] coordinates {(67108864,1) (67108864,2000000)};   % 512 x 512 x 256	
    		\draw  (axis cs:15000,20) rectangle (axis cs:70000,200);
    		%\spy on (0.9,0.9) in node [left] at (11.5,2);
			\legend{{\scriptsize $K_{\mathrm{NAGD}}=120 \; N_{\mathrm{Threads}}=1$},{\scriptsize $K_{\mathrm{NAGD}}=200 \; N_{\mathrm{Threads}}=1$},{\scriptsize $K_{\mathrm{NAGD}}=120 \; N_{\mathrm{Threads}}=8$},{\scriptsize $K_{\mathrm{NAGD}}=200 \; N_{\mathrm{Threads}}=8$}};
		\end{axis}		
			\node[pin={[pin edge=black,pin distance=5.3cm]10:{}}] at (1.1,0.8) {};
	   \node at (8,1.5) {\begin{tikzpicture}[baseline,trim axis left,trim axis right]
	    \begin{axis}[
	 					  width=0.35\textwidth,
	      				  grid=both,                        % grid
						  ymode=log,                        % y-log
						  xmode=log,                        % x-log
	     				  xmax=70000,ymax=200,xmin=15000,ymin=20,
	     				  xtick={20000,30000,40000,50000,60000,70000},
	     				  ytick={20,30,40,50,60,70,80,90,100,200},
	     				  xticklabels={$2\cdot10^4$,,$4\cdot10^4$,,,$7\cdot10^4$},
	     				  yticklabels={20,,,50,,,,,100,200},
	    				 ]
	    \addplot[red,thick,domain=1:70000,samples=400] {x*120*1*16/10^6};		
	    		\addplot[red,dashed, thick,domain=1:70000,samples=400] {x*200*1*16/10^6};
	    		\addplot[black,mark=x,mark size=3,thick] coordinates {(16384,31)};               
	    		\addplot[black,mark=x,mark size=3,thick] coordinates {(36864,71)};               
	    		\addplot[black,mark=x,mark size=3,thick] coordinates {(65536,123)};      
	    \end{axis}
	    \end{tikzpicture}};
	\end{tikzpicture}
	}
	\caption{\label{fig:Memory} Predicted evolution of $\Delta_{\mathrm{Mem}}$ as function of the number $N$ of points for two values of  $K_{\mathrm{NAGD}}$ and $N_{\mathrm{Threads}}$. The vertical dashed  lines give examples of 2D and 3D volumes for a range of values of $N$. Finally, the three crosses correspond to  values of $\Delta_{\mathrm{Mem}}$ measured experimentally.}
\end{figure}

\subsection{Conjugate gradient \textit{vs.} Nesterov accelerated gradient descent for \eqref{eq:ForwardOptPb}} \label{sec:ForwardComputation}  

	Due to Proposition~\ref{propo}, we can compute both \eqref{eq:ForwardOptPb} and $\Jd^H_{h_p}(\fd)$ using any state-of-the-art quadratic optimization algorithm. This contrasts with SEAGLE, where one must derive the error-backpropagation rule from the forward algorithm, which may limit its choice. We now provide numerical evidence that GC is more efficient than NAGD for solving \eqref{eq:ForwardOptPb}. To this end, we consider a circular object (bead) of radius $r_{\mathrm{bead}}$ and refractive index $n_{\mathrm{bead}}$ immersed into water ($n_{\mathrm{b}}=1.333$), as presented in Fig.~\ref{fig:ExForward} (top-left).  In such a situation, an analytic expression of the total field is provided by the  Mie theory \cite{Devaney2012,Stratton2007}. Hence, at each iteration $k$, we compute the relative error $\varepsilon_k$ of the current estimate $\ud^k$  to the Mie solution~$\ud_{\text{\tiny Mie}}$ as
\begin{equation}\label{eq:relativeErrForward}
	\varepsilon_{k} = \frac{\|\ud^k - \ud_{\text{\tiny Mie}}\|^2}{\|\ud_{\text{\tiny Mie}}\|^2}.
\end{equation}
	In our experiment, the bead is impinged by a plane wave of wavelength $\lambda=406$~\nano\meter. The region of interest is square with a side length of $16 \lambda$ (see top-left panel of Fig.~\ref{fig:ExForward}). It is sampled using $1,\!024$ points along each side. We used a fine grid in order to limit the impact of numerical errors related to discretization. The wave source corresponds to the bottom border of this region. Then, as in \cite{Liu2016,Liu2017}, we refer to the refractive index $n_{\mathrm{bead}}$ by its contrast with respect to the background medium, defined as $\max(|\fd|)/(k_0^2 n_{\mathrm{b}}^2)$. We show in Fig.~\ref{fig:ForwardComparisons}  the evolution of $k_{\varepsilon_0}$, which is the number of iterations needed to let the relative error \eqref{eq:relativeErrForward} fall below $\varepsilon_0=10^{-2}$. One can observe that  CG  is much more efficient than NAGD, in particular for large contrasts. This is not negligible since an evaluation of the forward model is required at each iteration of Algorithm~\ref{Algo:FBSAcc} (line 4). Our comparison in terms of a number of iterations is fair because  the computational cost of one iteration is the same for both algorithms. Note that the descent step of NAGD was adapted during the iterations following the same rule as in~\cite{Liu2016,Liu2017}.

\begin{figure}[t]
		\centering
				\begin{tikzpicture}
		\begin{groupplot}[group style={group size= 2 by 2,                      % taille du group plot
    					  horizontal sep=0.3cm, vertical sep=0.7cm},          % espace horizontal entre les figures
						  xmin=-8,xmax=8,
					   	  ymin=-8,ymax=8,
						  title style={yshift=-0.10cm},
						  axis equal image,
						  axis on top,
						  grid style={black},
    					  width=0.45\textwidth]
    	\nextgroupplot[title={Bead Setting},xticklabels={,,},ytick={-5,0,5},yticklabels={$-5\lambda$,$0$,$5\lambda$}] 	
    	\draw [fill=black] (axis cs:-8,-8) rectangle (axis cs:8,8);
    	\draw [fill=white] (axis cs:0,0) circle (3);
    	\node[black] at (axis cs:0,0) {{\scriptsize $n=1.88$}};
    	\node[white] at (axis cs:-5,6) {{\scriptsize $n=1.33$}};
    	\nextgroupplot[enlargelimits=false,title={Mie Solution},yticklabels={,,},xticklabels={,,}] 	
    	   \addplot[] graphics[xmin=-8,ymin=-8,xmax=8,ymax=8] {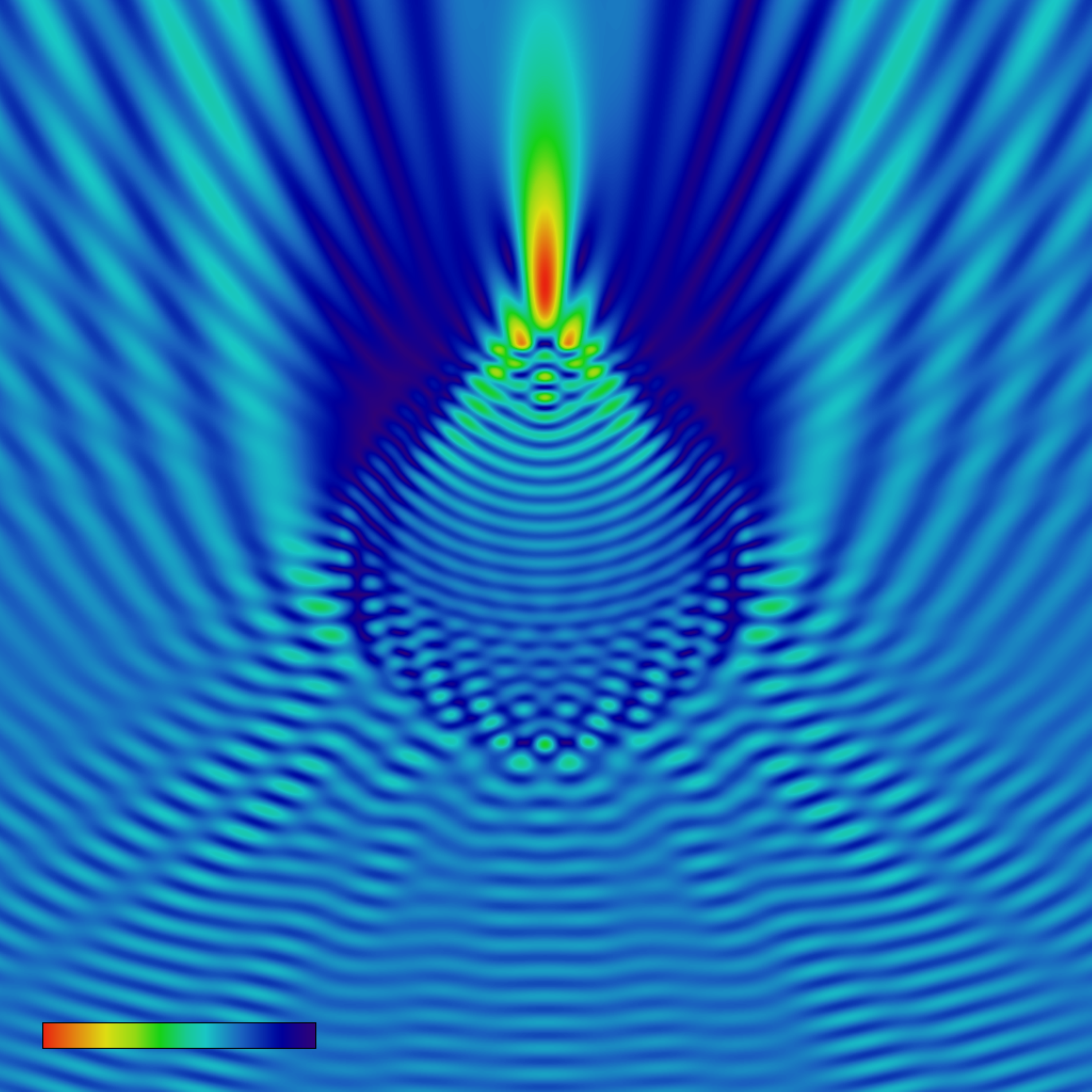};
    		\draw [dashed,white,thick] (axis cs:0,0) circle (3);   		
    		\node[white] at (axis cs:-7.3,-6.5) {{\tiny0}};
    		\node[white] at (axis cs:-2.8,-6.5) {{\tiny 1 a.u.}};
    	\nextgroupplot[enlargelimits=false,title={CG},ytick={-5,0,5},yticklabels={$-5\lambda$,$0$,$5\lambda$},xtick={-5,0,5},xticklabels={$-5\lambda$,$0$,$5\lambda$}] 			  
    		\addplot[] graphics[xmin=-8,ymin=-8,xmax=8,ymax=8] {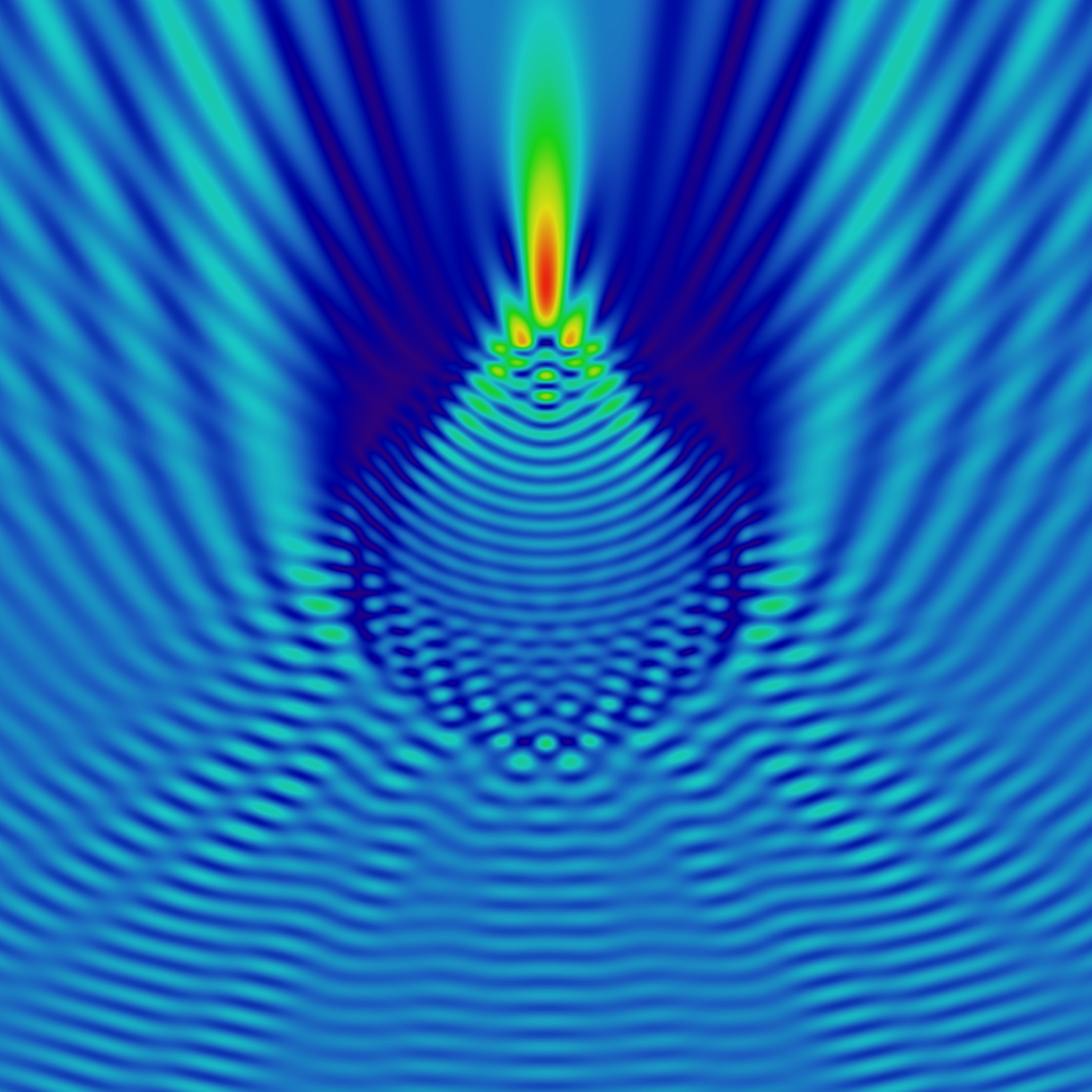};
    		\draw [dashed,white,thick] (axis cs:0,0) circle (3);
    	\nextgroupplot[enlargelimits=false,title={NAGD},yticklabels={,,},xtick={-5,0,5},xticklabels={$-5\lambda$,$0$,$5\lambda$}] 			  
    		\addplot[] graphics[xmin=-8,ymin=-8,xmax=8,ymax=8] {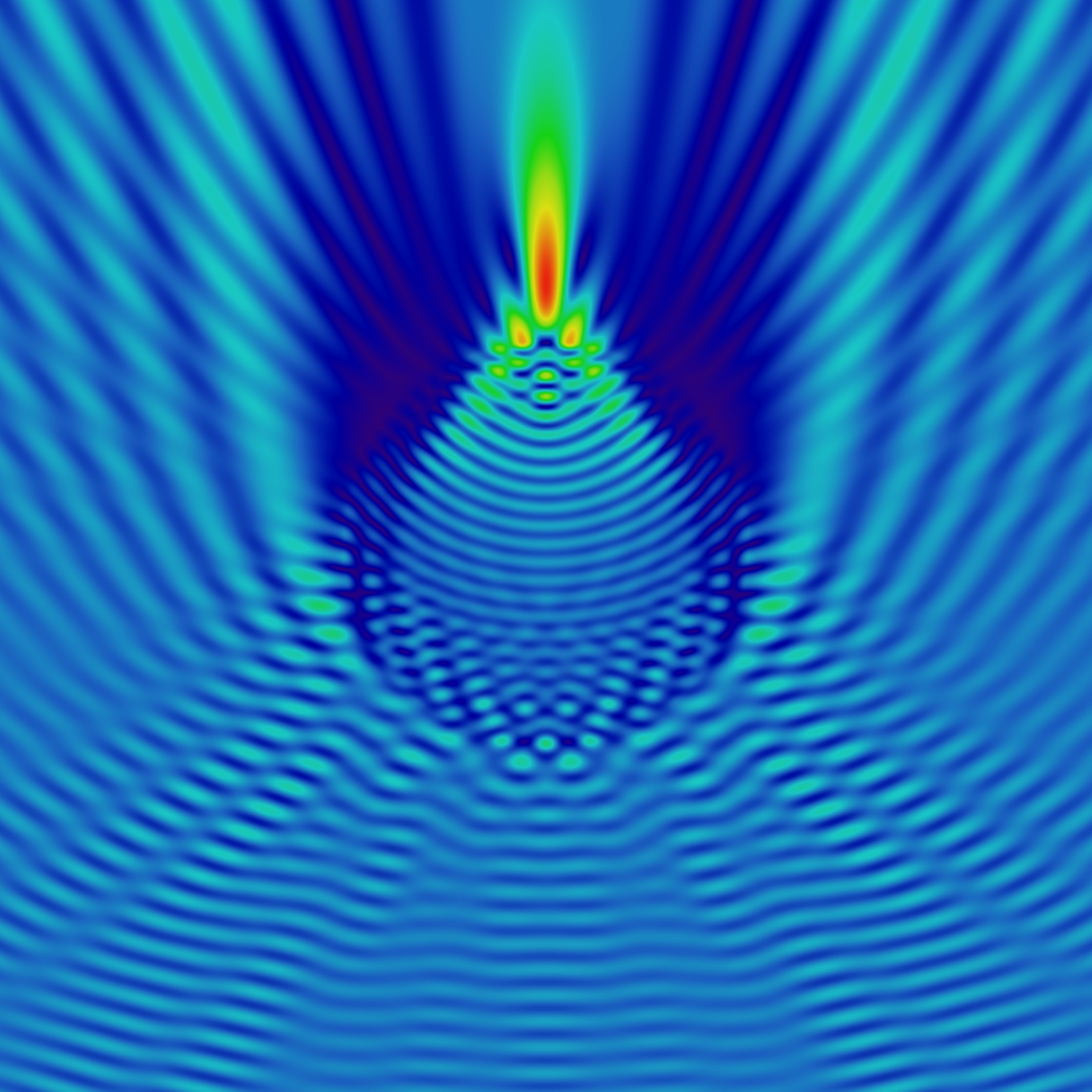};
    		\draw [dashed,white,thick] (axis cs:0,0) circle (3);
		\end{groupplot}
	\end{tikzpicture}
		\caption{\label{fig:ExForward} Forward-model solution for a bead with radius $3\lambda$ and a contrast of $1$ using CG (bottom-left) and NAGD (bottom-right), as well as the Mie solution (top-right). The setting used for this experiment is presented in the top-left panel. The colormap is the same for each figure.}
	\end{figure}

\begin{figure}[t]
		\centering 
		\begin{tikzpicture}
		\begin{groupplot}[group style={group size= 2 by 1,horizontal sep=0.3cm},
    					  legend pos= south east,           % Position de la legende dans les figures
    					  legend style={legend cell align=left},
						  grid=both,                        % grid
						  ylabel absolute, 
						  ymode=log,                        % y-log
    					  width=0.45\textwidth]		
    		\nextgroupplot[ylabel={$k_{\varepsilon_0}$},xlabel={$r_{\mathrm{bead}}$ ($\lambda$ unit)},ymin=10,ymax=3000,xmin=2,xmax=4,title={Contrast $0.3$},xtick={2,2.5,3,3.5,4},xticklabels={$2\lambda$, ,$3\lambda$, ,$4\lambda$},ylabel absolute, ylabel style={yshift=-0.3cm}]	
    		\addplot[red,very thick,mark=diamond,mark size=4] table{N_1024_Contrast_0.3_varRadius_res_CG_final.dat};
			\addplot[blue,very thick,mark=o,mark size=3] table{N_1024_Contrast_0.3_varRadius_res_Nest_final.dat};
			\nextgroupplot[xlabel={Contrast},ymin=10,ymax=3000,xmin=0.1,xmax=1,title={$r_{\mathrm{bead}}=3\lambda$},yticklabels={ , , }]	
    		\addplot[red,very thick,mark=diamond,mark size=4] table{Radius_3_N_1024_varContrast_res_CG_final.dat};
			\addplot[blue,very thick,mark=o,mark size=3] table{Radius_3_N_1024_varContrast_res_Nest_final.dat};
			\legend{CG,NAGD};
		\end{groupplot}
	\end{tikzpicture}
	\caption{\label{fig:ForwardComparisons} Evolution of the number of iterations $k_{\varepsilon_0}$ needed to let the relative error \eqref{eq:relativeErrForward} fall below $\varepsilon_0=10^{-2}$ as function of bead radius (left) and bead contrast (right).}
\end{figure}
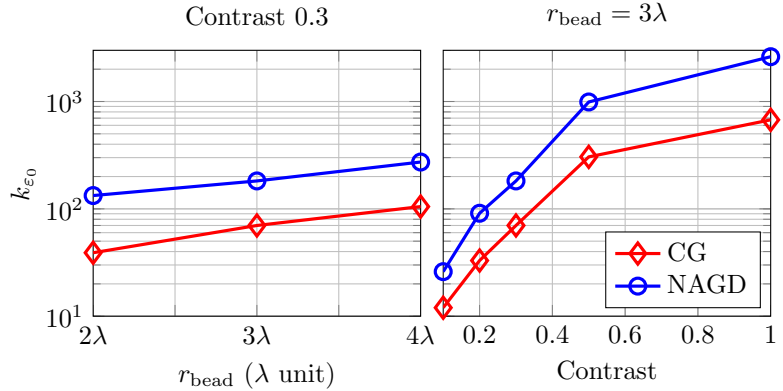

Finally, the solution obtained with the two algorithms for $r_{\mathrm{bead}}=3\lambda$ and a contrast of $1$ are shown in Fig.~\ref{fig:ExForward}. The analytic Mie solution is also provided for comparison. From these figures, one can appreciate the high accuracy obtained by solving \eqref{eq:ForwardOptPb}, as first demonstrated in \cite{Liu2016,Liu2017}.

\section{Numerical experiments} \label{sec:ExpNum}

	This section is devoted to numerical experiments that illustrate the two main advantages of the proposed method over SEAGLE, which consist in a reduced computational cost and a reduced memory consumption. The algorithms have been implemented using an inverse-problem library developed in our group \cite{Unser2017} (\textit{GlobalBioIm}: http://bigwww.epfl.ch/algorithms/globalbioim/). Hence, they share the implementation of the overall FISTA algorithm as well as inner procedures such as the computation of the proximity operator of $\mathcal{R}$ (see Appendix~\ref{apndx:Prox}). The only difference between the two methods resides in the computation of the forward model in \eqref{eq:ForwardOptPb} and  $\Jd^H_{h_p}(\fd)$. For SEAGLE, this is performed using the NAGD algorithm and an error-backpropagation strategy. For our method, \eqref{eq:ForwardOptPb} and $\Jd^H_{h_p}(\fd)$ are computed using the CG algorithm, in accordance with Proposition \ref{propo}. Note that no parallelization is used. Reconstructions are performed with  MATLAB 9.1 (The MathWorks Inc., Natick, MA, 2000) on a PowerEdge T430 Dell computer (Intel Xeon~E5-2620~v3).
	
\subsection{Simulated data}\label{sec:ExpNumSynth}
	
\subsubsection{Simulation settings}

	The Shepp-Logan phantom of Fig.~\ref{fig:shepp} has the contrast  $\max(|\fd|)/(k_0^2 n_{\mathrm{b}}^2)=0.2$. It is immersed into water ($n_{\mathrm{b}}=1.333$). The wavelength of  the incident plane waves  is  $\lambda=406$~\nano\meter. We consider thirty-one incident angles, from $-60$\degree~to $+60$\degree. The sources are placed at the bottom side of the sample, at a distance of 16.5$\lambda$ from its center. Moreover, we consider two detectors placed on both top and bottom sides of the object, also at a distance of 16.5$\lambda$ from  its center.  Hence, the overall region is a square of length $33\lambda$ per side. Data are simulated using a fine discretization  of this region, with a ($1024 \times 1024$) grid that leads to square pixels of surface $(3.223\cdot 10^{-2}\lambda)^2$. We used a large number of CG iterations to get the accurate simulation mentioned in Section~\ref{sec:ForwardComputation}.  Then, the measurements were extracted from the first and last rows of each total field associated to the incident fields. This lead to a total of ($31 \times 2 \times 1024$) measurements. Finally, we defined three ODT problems by downsizing (using averaging) the ($31 \times 2 \times 1024$) measurements to grids with size of ($31 \times 2 \times 512$), ($31 \times 2 \times 384$), and ($31 \times 2 \times 256$). 
	
	This setting corresponds to an ill-posed and highly scattering situation. Moreover, the detector length is only two times larger than the object, which results in a loss of information for large incident angles.  This makes the resulting inverse problem challenging.
	
	\begin{figure}[t]
		\centering
		\begin{tikzpicture}
		\begin{axis}[	  xmin=-8,xmax=8,
					   	  ymin=-8,ymax=8,
						  axis equal image,
						  axis on top,
						  grid style={black},
						  enlargelimits=false,
						  ytick={-5,0,5},yticklabels={$-5\lambda$,$0$,$5\lambda$},
						  xtick={-5,0,5},xticklabels={$-5\lambda$,$0$,$5\lambda$},
    					  width=0.55\textwidth]
    	  \addplot[] graphics[xmin=-8,ymin=-8,xmax=8,ymax=8] {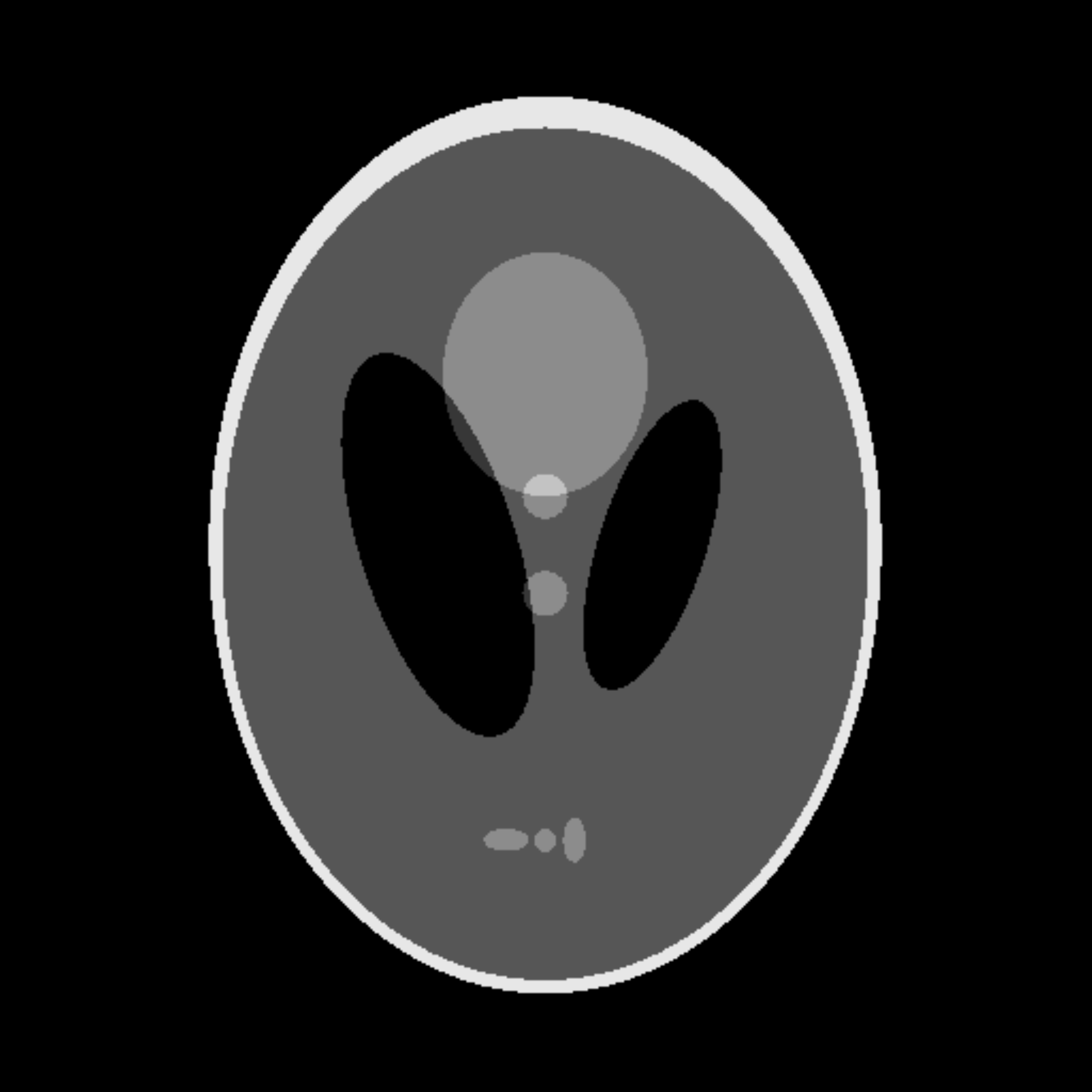};
    	  \node[yellow] at (axis cs:-6,6) {{\scriptsize 1.333}};
    	  \node[yellow] at (axis cs:2,-3.2) {{\scriptsize 1.39}};
    	  \node[yellow] at (axis cs:6.4,6) {{\scriptsize 1.457}};
    	  \node[yellow] at (axis cs:0,3) {{\scriptsize 1.407}};
    	  \node[yellow] at (axis cs:-6.5,-2) {{\scriptsize 1.437}};
    	  \draw[yellow,-latex] (axis cs:5.5,6) -- (axis cs:2.9,5);
    	  \draw[yellow,-latex] (axis cs:-5.3,-2) -- (axis cs:0,0.9);
		\end{axis}
		\end{tikzpicture}
		\caption{\label{fig:shepp} Sheep-Logan phantom and refractive indices of the gray levels. The contrast is 20\%.}
	\end{figure}

\subsubsection{Algorithm parameters}

	For each simulated OTD problem, we considered a square region of interest $\Omega$ with sides half the sources--detector distance. That corresponds to images of size ($256 \times 256$) with pixels of area $(6.445\cdot 10^{-2} \lambda)^2$, ($192 \times 192$) with pixels of area $(8.839\cdot 10^{-2} \lambda)^2$, and ($128 \times 128$) with pixels of area $(1.289\cdot 10^{-1} \lambda)^2$. The support of the phantom is fully contained in $\Omega$. 
	
	Then, to compute the gradient (stochastic-gradient strategy), we  selected eight angles over the thirty-one that were available and changed this selection at each iteration (see Section \ref{sec:speedup}).

The NAGD or CG forward algorithms are stopped either after hundred-twenty iterations or when the relative error between two iterates is below $10^{-4} $. Finally,   two-hundred iterations of FISTA are performed with a descent step   fixed empirically  to $\gamma = 5\cdot 10^{-3}$. We used the regularization parameter $\mu= 3.3 \cdot 10^{-2}$.

\subsubsection{Metrics} 

 We compared the two methods in terms of running time and memory consumption, as measured by the peak memory (maximum allocated memory) reached by each algorithm during execution. The outcome is reported in Table~\ref{table}. Once again, due to the use of our inverse-problem library \cite{Unser2017}, the comparison of the two methods is fair because  their implementations differ only by the forward algorithm and by the computation of $\Jd^H_{h_p}(\fd)$. Moreover, CG and NAGD  are implemented in the same fashion since they inherit the same optimization class of our inverse-problem library. Finally, we also provide the SNR of the reconstructed refractive index and observe that the computational gain comes at no cost in quality.

\subsubsection{Discussion}

	Our proposed alternative to SEAGLE allows us to reduce both time  and memory. Moreover, we have measured the peak memory difference $\Delta_{\mathrm{Mem}}$ between the two methods and superimposed it on the predictions of Fig.~\ref{fig:Memory}  where the adequacy between the theoretical curves and the measured points is remarkable. Hence, although our experiments  are restricted to 2D data, where the gap between the two algorithms is moderate, the evolution of $\Delta_{\mathrm{Mem}}$ for 3D data can be extrapolated from Fig.~\ref{fig:Memory}. This shows the relevance of our method when size increases.
	
	The SNR values given in Table \ref{table} as well as the reconstructions presented in Fig.~\ref{fig:Results} suggest that the two methods perform similarly in terms of quality. This is not surprising since the overall algorithm is the same, the differences residing merely in the computation of the forward model in \eqref{eq:ForwardOptPb} and the Jacobian  $\Jd_{h_p}(\fd)$. Moreover, one can observe that the quality of  reconstruction  decreases when the discretization grid becomes coarser. Indeed, the model is insufficiently accurate when the discretization is too poor. For instance, in the case of the ($128\times128$) grid, one wavelength unit is discretized using eight pixels, which is clearly detrimental to the accuracy of the forward model.
	
	Reconstructions for the ($256 \times 256$) problem are presented in Fig. \ref{fig:Results} for completeness. Besides the difficulty of the considered scenario, the two methods are able to retrieve most details of the object in comparison with the Rytov approximation. Artifacts are mainly due to the missing-cone problem and to the limited length of the detector. This corroborates the findings of \cite{Liu2017}.

	\begin{table}[t]
	\caption{\label{table} Proposed method \textit{vs.} SEAGLE \cite{Liu2016,Liu2017} in terms of running time and memory consumption. The reconstructed refractive-index maps are presented in Fig.~\ref{fig:Results}.}
		\centering		
		\begin{tabular}{lcccccc} 
			\toprule
			\toprule
			ROI $\Omega$ size  & \multicolumn{2}{c}{($128 \times 128$)} & \multicolumn{2}{c}{($192 \times 192$)} & \multicolumn{2}{c}{($256 \times 256$)} \\
			\cline{2-3} \cline{4-5} \cline{6-7}
			Method & Ours & \cite{Liu2017}  & Ours & \cite{Liu2017}    & Ours & \cite{Liu2017}   \\ \midrule
			 Time (\minute) & \textbf{9} & 35  & \textbf{12} & 72 & \textbf{19} & 110  \\
			 Memory (\mega b) & \textbf{138} & 169 & \textbf{224} & 295 & \textbf{337} & 460 \\
			 SNR (dB) & 43.96 & 43.76 & 45.44 & 45.48 & 46.96 & 46.99\\
			\bottomrule
			\bottomrule
		\end{tabular}		
	\end{table}

		\begin{figure}[t]
		\centering
				\begin{tikzpicture}
		\begin{groupplot}[group style={group size= 3 by 1,                      % taille du group plot
    					  horizontal sep=0.5cm, vertical sep=0.5cm},          % espace horizontal entre les figures
						  xmin=-8,xmax=8,
					   	  ymin=-8,ymax=8,
						  axis equal image,
						  axis on top,
						  grid style={black},
    					  width=0.45\textwidth]
    	\nextgroupplot[enlargelimits=false,ylabel={},xtick={-5,0,5},xticklabels={$-5\lambda$,$0$,$5\lambda$},ytick={-5,0,5},yticklabels={$-5\lambda$,$0$,$5\lambda$},ylabel absolute,title={Rytov Approximation}, ylabel style={yshift=-0.25cm}] 			
    	\addplot[] graphics[xmin=-8,ymin=-8,xmax=8,ymax=8] {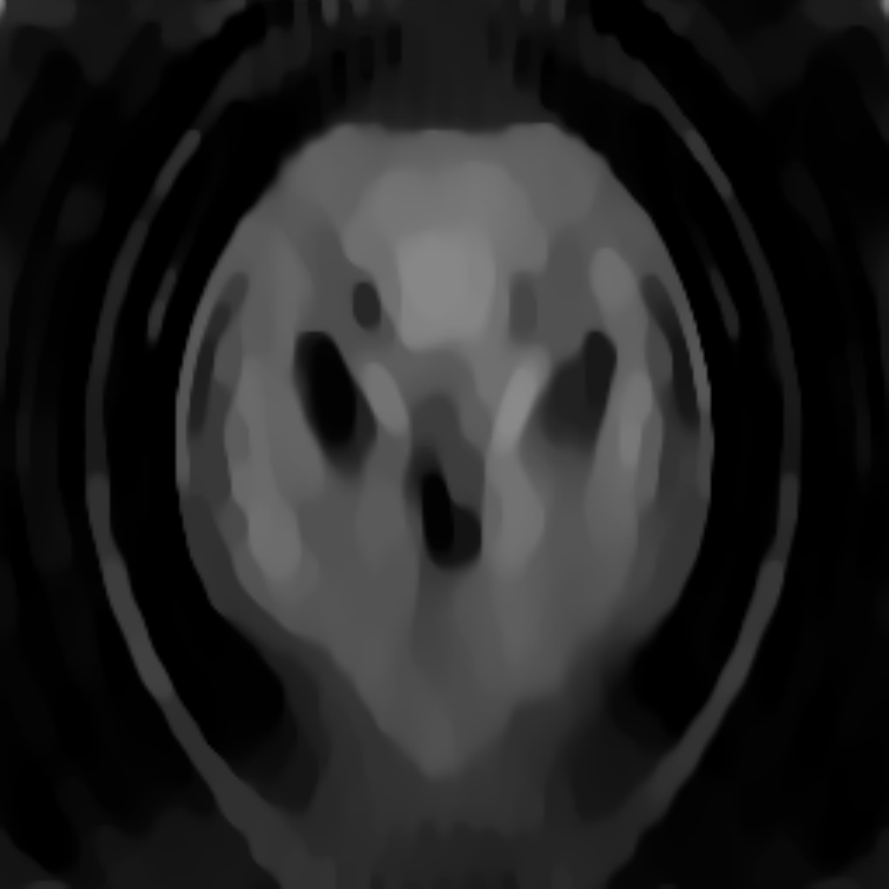};   	 	
    	\nextgroupplot[enlargelimits=false,ylabel={},xtick={-5,0,5},xticklabels={$-5\lambda$,$0$,$5\lambda$},ytick={-5,0,5},yticklabels={,,},ylabel absolute,title={Ours}, ylabel style={yshift=-0.25cm}] 			
    	\addplot[] graphics[xmin=-8,ymin=-8,xmax=8,ymax=8] {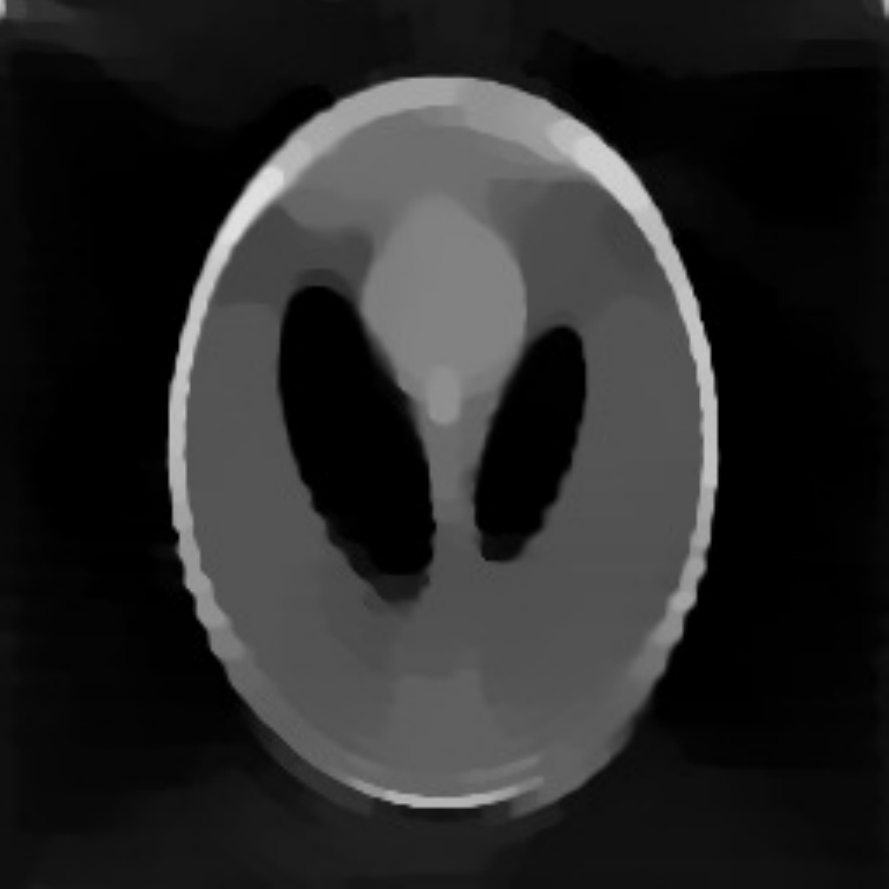};   	 	
    	\nextgroupplot[enlargelimits=false,xtick={-5,0,5},yticklabels={,,},xticklabels={$-5\lambda$,$0$,$5\lambda$},title={SEAGLE}] 	
    	\addplot[] graphics[xmin=-8,ymin=-8,xmax=8,ymax=8] {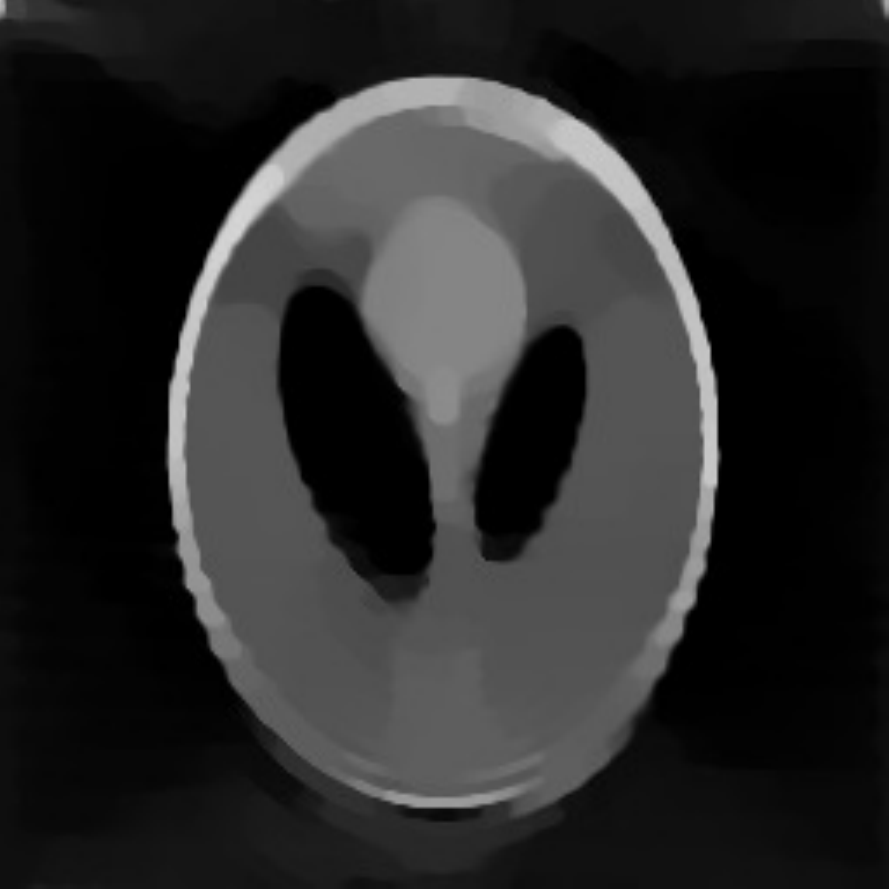};	   	
		\end{groupplot}
	\end{tikzpicture}
		\caption{\label{fig:Results} Reconstructions obtained by the proposed method and by SEAGLE for the $(256 \times 256)$ ODT problem with $\mu = 3.3\cdot 10^{-2}$. The colormap is identical to that of Fig.~\ref{fig:shepp}. For comparison, we provide the TV-regularized Rytov reconstruction with $\mu =3\cdot 10^{-3}$.}
	\end{figure}

\subsection{Real data}
	We evaluated our method using the \textit{FoamDielExt} target (TM polarisation) of the Institut Fresnel's public database~\cite{Geffrin2005}.  The data were collected for the two-dimensional inhomogeneous sample depicted  in the left panel of Fig.~\ref{fig:ResultsReal}. The permitivity of the ground truth was measured experimentally and is subject to uncertainties~\cite{Geffrin2005}.  The object is fully contained in a square region of length $15$ cm per side, which we discretize using a $256\times256$ grid. Sensors were distributed circularly around the object, at a distance of $1.67$ m from its center, and with a step of $1$\degree. Eight sources, uniformly distributed around the object, were sequentially activated. For each activated source, the sensors closer than $60$\degree~from the source were excluded. Thus, $241$ detectors among the $360$ available were used for each source. Frequencies from $2$ to $10$ GHz with a step of $1$ GHz are available in the database but we used only the $3$ GHz measurements (\ie $\lambda=10$ cm).
	
	The NAGD or CG forward algorithms are stopped either after two-hundred iterations or when the relative error between two iterates is below $10^{-6} $. Hundred iterations of FISTA are performed with a descent step  $\gamma = 5\cdot 10^{-3}$. We used the regularization parameter $\mu= 1.6 \cdot 10^{-2}$.
	
	In Fig.~\ref{fig:ResultsReal}, we see that both methods provide good reconstructions that are essentially indistinguishable (see also SNR values provided in the caption of the figure). This corroborates  the simulated numerical experiments of Section \ref{sec:ExpNumSynth}. The main point here is that, for this setting, the proposed method was 15 times faster than SEAGLE. 
	
	\begin{figure}[t]
		\centering
				\begin{tikzpicture}
		\begin{groupplot}[group style={group size= 3 by 1,                      % taille du group plot
    					  horizontal sep=0.5cm, vertical sep=0.5cm},          % espace horizontal entre les figures
						  xmin=-0.75,xmax=0.75,
					   	  ymin=-0.75,ymax=0.75,
						  axis equal image,
						  axis on top,
						  grid style={black},
    					  width=0.45\textwidth]
    	\nextgroupplot[enlargelimits=false,ylabel={},xtick={-0.5,0,0.5},xticklabels={$-\frac{\lambda}{2}$,$0$,$\frac{\lambda}{2}$},ytick={-0.5,0,0.5},yticklabels={$-\frac{\lambda}{2}$,$0$,$\frac{\lambda}{2}$},ylabel absolute,title={Ground truth}, ylabel style={yshift=-0.25cm}] 			
    	\addplot[] graphics[xmin=-0.75,ymin=-0.75,xmax=0.75,ymax=0.75] {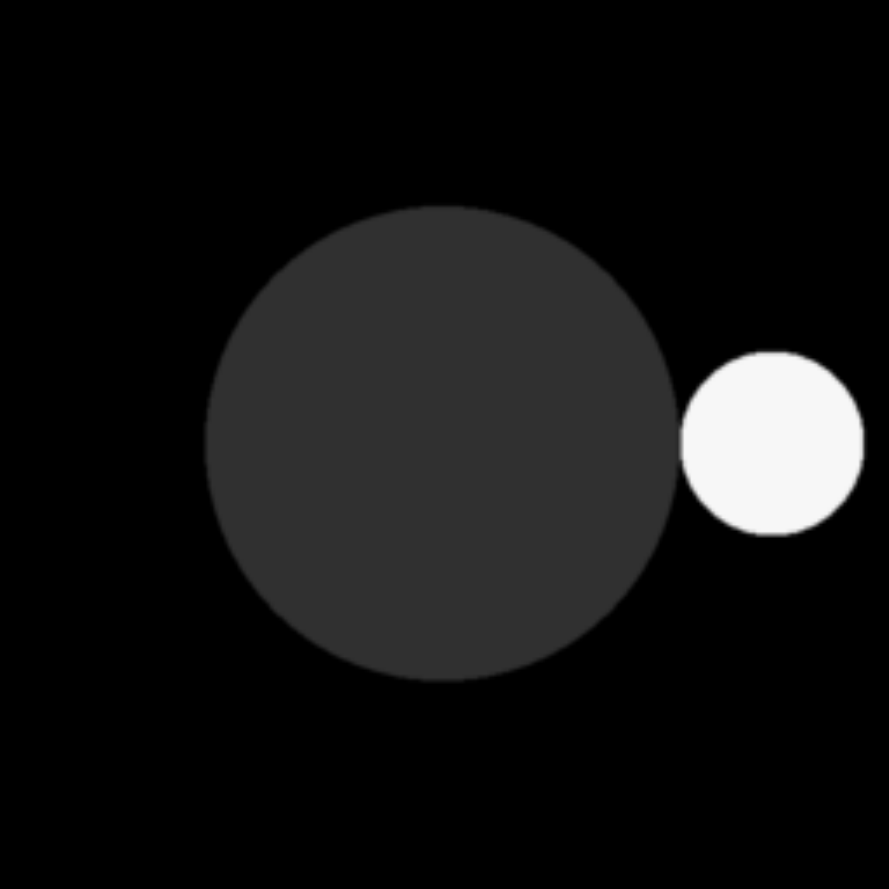};   	 
    	 \node[yellow] at (axis cs:0,0) {{\scriptsize $1.45 \pm 0.15$}};	
    	 \node[yellow] at (axis cs:-0.5,0.5) {{\scriptsize $1$}};	
    	 \node[yellow] at (axis cs:0.55,-0.45) {{\scriptsize $3 \pm 0.3$}};
    	  \draw[yellow,-latex] (axis cs:0.55,-0.4) -- (axis cs:0.55,0);
    	\nextgroupplot[enlargelimits=false,ylabel={},xtick={-0.5,0,0.5},xticklabels={$-\frac{\lambda}{2}$,$0$,$\frac{\lambda}{2}$},ytick={-0.5,0,0.5},yticklabels={,,},ylabel absolute,title={Ours}, ylabel style={yshift=-0.25cm}] 			
    	\addplot[] graphics[xmin=-0.75,ymin=-0.75,xmax=0.75,ymax=0.75] {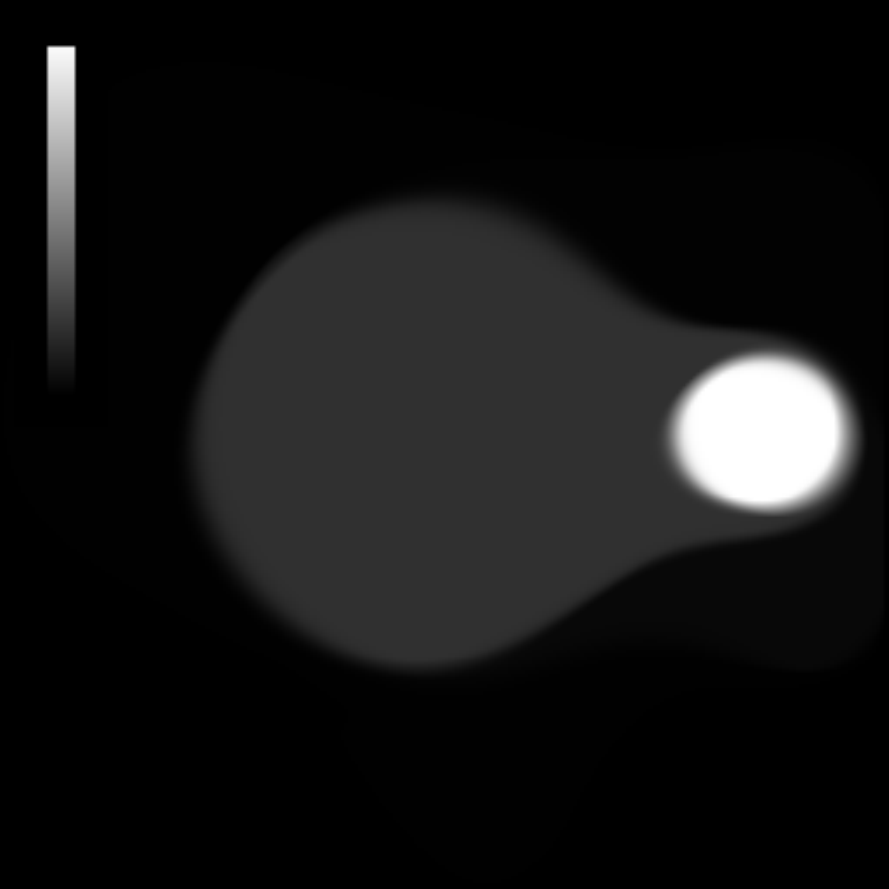};   	 	
    	\node[white] at (axis cs:-0.57,0.1) {{\scriptsize $1$}};
    	\node[white] at (axis cs:-0.52,0.67) {{\scriptsize $3.3$}};
    	\nextgroupplot[enlargelimits=false,xtick={-0.5,0,0.5},yticklabels={,,},xticklabels={$-\frac{\lambda}{2}$,$0$,$\frac{\lambda}{2}$},title={SEAGLE}] 	
    	\addplot[] graphics[xmin=-0.75,ymin=-0.75,xmax=0.75,ymax=0.75] {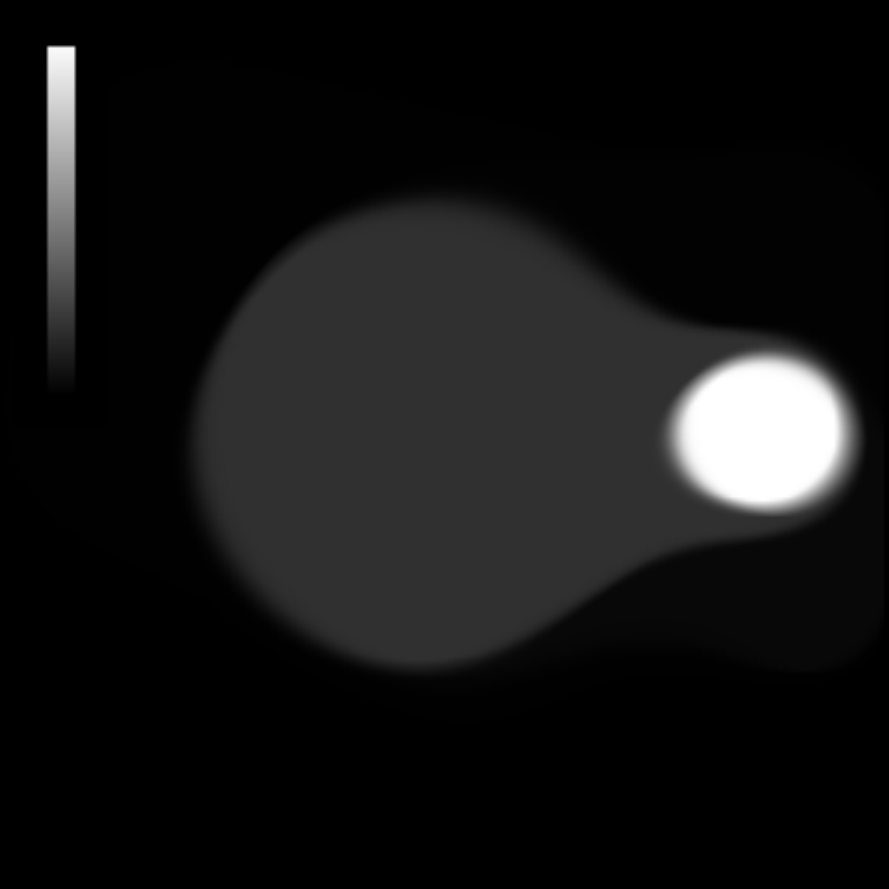};	   	
    	\node[white] at (axis cs:-0.57,0.1) {{\scriptsize $1$}};
    	\node[white] at (axis cs:-0.52,0.67) {{\scriptsize $3.3$}};
		\end{groupplot}
	\end{tikzpicture}
		\caption{\label{fig:ResultsReal} Reconstructions (permittivity) obtained by the proposed method and by SEAGLE for the \textit{FoamDielExt} target of the Institut Fresnel's database~\cite{Geffrin2005} with $\mu = 1.6 \cdot 10^{-2}$. The SNR values (computed from the experimentally measured permittivity of the ground truth) are $25.13$ dB (Ours) and $25.15$ dB (SEAGLE) while the computing times are respectively of $6$ min and $93$ min.}
	\end{figure}

\section{Conclusion}

		We have presented a refinement of the SEAGLE algorithm that was recently proposed in \cite{Liu2016,Liu2017} and that has shown unprecedented reconstructions for difficult configurations. However, the current limitation of SEAGLE is  that its memory requirements increase excessively with the size of the problem, particularly in 3D. As an alternative, we have derived the explicit expression of the Jacobian matrix  $\Jd_{h_p}(\fd)$ of the nonlinear Lippman-Schiwnger model and shown that it can be computed in a direct analogy with the computation of the forward model. This approach allows us to drastically reduce the memory consumption and opens the door to 3D reconstruction using desktop computers.  Moreover, the proposed  method is quite flexible in the sense that it can cope with any iterative algorithm employed to compute either  the forward model or  $\Jd^H_{h_p}(\fd)$. For instance, the conjugate-gradient algorithm proved its efficiency for this task. It allows a significant decrease of the computational time with respect to SEAGLE. Finally, these improvements in terms of speed and memory come at no loss in quality.

\appendix 
\section{Proximity operator of $\mathcal{R}$} \label{apndx:Prox} 
 
 	In this appendix, we describe how we compute the proximity operator of the regularization term $\mathcal{R}$ in \eqref{eq:Regul} using the ADMM algorithm \cite{Boyd2011}.
The proximity operator is defined \cite{Moreau1962} as the solution of the optimization problem 
\begin{equation}\label{eq:ProxR}
	\mathrm{prox}_{\mu \mathcal{R}}(\vd) = \argmin{\fd \in \R^N}{ \left( \frac12\|\fd - \vd\|^2_2 + \mu\| \fd\|_{\mathrm{TV}}+ i_{\geqslant 0}(\fd) \right) }
\end{equation}
for $\mu >0$.  Let us start by reformulating  \eqref{eq:ProxR} as
\begin{align}
\mathrm{prox}_{\mu \mathcal{R}}(\vd) = \argmin{\fd \in \R^N}{& \left( \frac12\|\fd - \vd\|^2_2 + \mu\|\qd_1\|_{2,1}+ i_{\geqslant 0}(\qd_2) \right) }, \notag \\
s.t. \quad & \qd_1 = \nab \fd ,\notag \\
& \qd_2 = \fd,
\end{align}
which admits the  augmented-Lagrangian form
\begin{multline}\label{eq:lagrangian}
	\mathcal{L}(\fd,\qd_1,\qd_2,\wdd_1,\wdd_2) = \frac12 \|\fd - \vd \|^2_2 + \frac{\rho_1}{2}\left\| \nab \fd - \qd_1 + \frac{\wdd_1}{\rho_1} \right\|^2_2 \\ + \frac{\rho_2}{2}\left\| \fd - \qd_2 + \frac{\wdd_2}{\rho_2} \right\|^2_2 + \mu \|\qd_1\|_{2,1} + i_{\geqslant 0}(\qd_2), 
\end{multline}
where $\rho_1$ and $\rho_2$ are positive scalars, and where $\wdd_1 \in \R^{N \times D}$ and $\wdd_2 \in \R^N $ are the Lagrangian multipliers. Then, one can minimize \eqref{eq:lagrangian} using  ADMM. The iterates  are summarized in Algorithm~\ref{Algo:ADMMProx}.
	\begin{algorithm}[t]
		\caption{ADMM for solving \eqref{eq:ProxR}.}\label{Algo:ADMMProx}
	\begin{algorithmic}[1]
		\REQUIRE  $\fd^0 \in \R^N$, $\mu >0$, $\rho_1 >0$, $\rho_2 >0$
		\STATE $\mathbf{A}=\left( (1+\rho_2)\Id + \rho_1 \nab^T \nab\right)$
		\STATE $\qd_1^0 = \nab \fd^0$, $\qd_2^0 =  \fd^0$
		\STATE $\wdd_1 = \qd_1$, $\wdd_2 = \qd_2$
		\STATE $k=1$
		\WHILE{(not converged)}
			\STATE $\qd_1^{k+1} = \mathrm{prox}_{\frac{\mu}{\rho_1}\|\cdot\|_{2,1}  }\left( \nab \fd^k + \frac{\wdd_1^k}{\rho_1}\right) $
			\STATE $\qd_2^{k+1} = \mathrm{prox}_{i_{\geqslant 0}}\left( \fd^k + \frac{\wdd_2^k}{\rho_2}\right) $			
			\STATE$\fd^{k+1}= \mathbf{A} ^{-1} \left( \vd + \rho_1 \nab^T\left( \qd_1^{k+1} -\frac{\wdd_1^k}{\rho_1}   \right)       +\rho_2 \qd_2^{k+1}  -\wdd_2^k  \right)  $ \COMMENT{Fourier division}
			\STATE $\wdd_1^{k+1} = \wdd_1^k + \rho_1(\nab \fd^{k+1} - \qd_1^{k+1})$
			\STATE $\wdd_2^{k+1} = \wdd_2^k + \rho_2(\fd^{k+1} - \qd_2^{k+1})$
			\STATE $k=k+1$
 		\ENDWHILE
 	\end{algorithmic}
	\end{algorithm}
	
	For the sake of completeness, we provide in \eqref{eq:apendix1} and \eqref{eq:apendix2} the expressions
	\begin{align}
		\forall \qd \in  \R^N,  & \left[ \mathrm{prox}_{i_{\geqslant 0}}(\qd)\right]_n  = (\qd_n)_+, \label{eq:apendix1} \\
		\forall \qd \in  \R^{N \times D}, &  \left[ \mathrm{prox}_{\gamma \|\cdot\|_{2,1}  }(\qd)\right]_{n,d}   =  \qd_{n,d}	\left( 1 - \frac{\gamma}{\| \qd_{n,.} \|_2}  \right)_+, \label{eq:apendix2}
	\end{align}
	 of $\mathrm{prox}_{i_{\geqslant 0}}$ and $\mathrm{prox}_{\gamma \|\cdot\|_{2,1}  }$ where 
	\begin{equation}
	(x)_+ : = \max(x,0), x \in \R.
\end{equation}		

% use section* for acknowledgment

\section*{Acknowledgment}
This research was supported by the European Research Council (ERC) under the European Union's Horizon 2020 research and innovation programme, Grant Agreement no 692726 ``GlobalBioIm: Global integrative framework for computational bio-imaging.''

%%%%%%%%%%%%%%%%%%%%%%% References %%%%%%%%%%%%%%%%%%%%%%%%%


\begin{thebibliography}{10}
\newcommand{\enquote}[1]{``#1''}

\bibitem{Wolf1969}
E.~Wolf, \enquote{Three-dimensional structure determination of semi-transparent
  objects from holographic data,} Opt. Commun. \textbf{1}, 153--156
  (1969).

\bibitem{Jin2017}
D.~Jin, R.~Zhou, Z.~Yaqoob, and P.~T. So, \enquote{Tomographic phase
  microscopy: Principles and applications in bioimaging,} J. Opt. Soc. Am. B \textbf{34}, B64--B77 (2017).

\bibitem{Devaney1981}
A.~Devaney, \enquote{Inverse-scattering theory within the {Rytov}
  approximation,}  Opt. Lett. \textbf{6}, 374--376 (1981).

\bibitem{Mudry2012}
E.~Mudry, P.~C. Chaumet, K.~Belkebir, and A.~Sentenac, \enquote{Electromagnetic
  wave imaging of three-dimensional targets using a hybrid iterative inversion
  method,} Inverse Probl. \textbf{28}, 065007 (2012).

\bibitem{Kamilov2016}
U.~S. Kamilov, D.~Liu, H.~Mansour, and P.~T. Boufounos, \enquote{A recursive
  {Born} approach to nonlinear inverse scattering,} IEEE Signal Process. Lett. \textbf{23}, 1052--1056 (2016).

\bibitem{Liu2016}
H.-Y. Liu, U.~S. Kamilov, D.~Liu, H.~Mansour, and P.~T. Boufounos,
  \enquote{Compressive imaging with iterative forward models,} in \enquote{IEEE
  International Conference on Acoustics, Speech and Signal Processing
  (ICASSP),}  (IEEE, 2017), pp. 6025--6029.

\bibitem{Liu2017}
H.-Y. Liu, D.~Liu, H.~Mansour, P.~T. Boufounos, L.~Waller, and U.~S. Kamilov,
  \enquote{{SEAGLE}: Sparsity-driven image reconstruction under multiple
  scattering,} arXiv preprint arXiv:1705.04281  (2017).

\bibitem{Colton2012}
D.~Colton and R.~Kress, \emph{Inverse Acoustic and Electromagnetic Scattering
  Theory}, vol.~93 (Springer Science \& Business Media, 2012).

\bibitem{Schmalz2010}
J.~A. Schmalz, G.~Schmalz, T.~E. Gureyev, and K.~M. Pavlov, \enquote{On the
  derivation of the {Green}'s function for the {Helmholtz} equation using
  generalized functions,} Am. J. Phys. \textbf{78}, 181--186
  (2010).

\bibitem{Kak2001}
A.~C. Kak and M.~Slaney, \emph{Principles of Computerized Tomographic Imaging}
  (SIAM, 2001).

\bibitem{Choi2007}
W.~Choi, C.~Fang-Yen, K.~Badizadegan, S.~Oh, N.~Lue, R.~R. Dasari, and M.~S.
  Feld, \enquote{Tomographic phase microscopy,} Nat. Methods \textbf{4}, 717
  (2007).

\bibitem{Chen1998}
B.~Chen and J.~J. Stamnes, \enquote{{Validity of diffraction tomography based
  on the first Born and the first Rytov approximations},} Appl. Opt.
  \textbf{37}, 2996--3006 (1998).

\bibitem{Sung2009}
Y.~Sung, W.~Choi, C.~Fang-Yen, K.~Badizadegan, R.~R. Dasari, and M.~S. Feld,
  \enquote{Optical diffraction tomography for high resolution live cell
  imaging,} Opt. Express \textbf{17}, 266--277 (2009).

\bibitem{Sung2011}
Y.~Sung and R.~R. Dasari, \enquote{Deterministic regularization of
  three-dimensional optical diffraction tomography,} J. Opt. Soc. Am. A \textbf{28}, 1554--1561 (2011).

\bibitem{Lim2015}
J.~Lim, K.~Lee, K.~H. Jin, S.~Shin, S.~Lee, Y.~Park, and J.~C. Ye,
  \enquote{Comparative study of iterative reconstruction algorithms for missing
  cone problems in optical diffraction tomography,} Opt. Express \textbf{23},
  16933--16948 (2015).

\bibitem{Chaumet2009}
P.~C. Chaumet and K.~Belkebir, \enquote{Three-dimensional reconstruction from
  real data using a conjugate gradient-coupled dipole method,} Inverse Probl.
  \textbf{25}, 024003 (2009).
  

\bibitem{Belkebir2005}
K.~Belkebir, P.~C. Chaumet, and A.~Sentenac, \enquote{Superresolution in total
  internal reflection tomography,} J. Opt. Soc. Am.  A
  \textbf{22}, 1889--1897 (2005).

\bibitem{Abubakar2002}
A.~Abubakar and P.~M. van~den Berg, \enquote{The contrast source inversion
  method for location and shape reconstructions,} Inverse Probl. \textbf{18},
  495 (2002).

\bibitem{Kamilov2015}
U.~Kamilov, I.~Papadopoulos, M.~Shoreh, A.~Goy, C.~Vonesch, M.~Unser, and
  D.~Psaltis, \enquote{Learning approach to optical tomography,} Optica
  \textbf{2}, 517--522 (2015).

\bibitem{Beck2009}
A.~Beck and M.~Teboulle, \enquote{A fast iterative shrinkage-thresholding
  algorithm for linear inverse problems,} SIAM J. Imaging Sci.
  \textbf{2}, 183--202 (2009).

\bibitem{Nesterov1983}
Y.~Nesterov, \enquote{A method of solving a convex programming problem with
  convergence rate {$O (1/k^2)$},} Soviet Math. Dokl.  \textbf{27},
  372--376 (1983).

\bibitem{Combettes2005}
P.~L. Combettes and V.~R. Wajs, \enquote{Signal recovery by proximal
  forward-backward splitting,} Multiscale Model Simul. \textbf{4},
  1168--1200 (2006).

\bibitem{Nesterov2007}
Y.~Nesterov, \enquote{Gradient methods for minimizing composite functions,}
  Math. Prog. \textbf{140}, 125--161 (2013).

\bibitem{Attouch2013}
H.~Attouch, J.~Bolte, and B.~F. Svaiter, \enquote{Convergence of descent
  methods for semi-algebraic and tame problems: {Proximal} algorithms,
  forward--backward splitting, and regularized {Gauss--Seidel} methods,}
  Math. Prog. \textbf{137}, 91--129 (2013).

\bibitem{Beck2009b}
A.~Beck and M.~Teboulle, \enquote{Fast gradient-based algorithms for
  constrained total variation image denoising and deblurring problems,} IEEE Trans. Image Process. \textbf{18}, 2419--2434 (2009).

\bibitem{Chambolle2011}
A.~Chambolle and T.~Pock, \enquote{A first-order primal-dual algorithm for
  convex problems with applications to imaging,} J. Math. Imaging Vision \textbf{40}, 120--145 (2011).

\bibitem{Kamilov2017}
U.~S. Kamilov, \enquote{A parallel proximal algorithm for anisotropic total
  variation minimization,}  IEEE Trans. Image Process. \textbf{26},
  539--548 (2017).

\bibitem{Boyd2011}
S.~Boyd, N.~Parikh, E.~Chu, B.~Peleato, and J.~Eckstein, \enquote{Distributed
  optimization and statistical learning via the alternating direction method of
  multipliers,} Found. Trends Mach. Learn. \textbf{3}, 1--122
  (2011).

\bibitem{Lefkimmiatis2013}
S.~Lefkimmiatis, J.~Ward, and M.~Unser, \enquote{Hessian {S}chatten-norm
  regularization for linear inverse problems,} IEEE Trans. Image Process. \textbf{22}, 1873--1888 (2013).

\bibitem{Bottou2010}
L.~Bottou, \enquote{Large-scale machine learning with stochastic gradient
  descent,} in \enquote{Proceedings of COMPSTAT'2010: 19th International
  Conference on Computational StatisticsParis France, August 22-27, 2010
  Keynote, Invited and Contributed Papers,}  Y.~Lechevallier and G.~Saporta,
  eds. (Physica-Verlag HD, Heidelberg, 2010), pp. 177--186.

\bibitem{Devaney2012}
A.~J. Devaney, \emph{Mathematical {Foundations of Imaging, Tomography and
  Wavefield Inversion}} (Cambridge University Press, 2012).

\bibitem{Stratton2007}
J.~A. Stratton, \emph{Electromagnetic Theory} (John Wiley \& Sons, 2007).

\bibitem{Unser2017}
M.~Unser, E.~Soubies, F.~Soulez, M.~McCann, and L.~Donati,
  \enquote{{GlobalBioIm}: {A} unifying computational framework for solving
  inverse problems,} in \enquote{Proceedings of the {OSA} Imaging and Applied
  Optics Congress on Computational Optical Sensing and Imaging ({COSI'17}),}
  (San Francisco CA, USA, 2017). Paper no.\ CTu1B.
  
\bibitem{Geffrin2005}
Jean-Michel  Geffrin, Pierre Sabouroux and Christelle~Eyraud, \enquote{Free space experimental scattering database continuation: experimental 
set-up and measurement precision,} Inverse Probl.
  \textbf{21}, (2005).

\bibitem{Moreau1962}
J.-J. Moreau, \enquote{Fonctions convexes duales et points proximaux dans un
  espace hilbertien,} C. R. Acad. Sci. Ser. A Math.  \textbf{255}, 2897--2899 (1962).

\end{thebibliography}
\end{document}